\documentclass[american,aps,pra,reprint,superscriptaddress,longbibliography]{revtex4-1}
\usepackage[unicode=true,pdfusetitle, bookmarks=true,bookmarksnumbered=false,bookmarksopen=false, breaklinks=false,pdfborder={0 0 0},backref=false,colorlinks=false] {hyperref}
\hypersetup{colorlinks,linkcolor=myurlcolor,citecolor=myurlcolor,urlcolor=myurlcolor}
\usepackage{braket,colortbl,amsthm,amsmath,amssymb,txfonts,graphicx}
\definecolor{myurlcolor}{rgb}{0,0,0.7}
\usepackage{cleveref}
\usepackage{soul}
\usepackage{subfigure}

\theoremstyle{plain}

\usepackage{ragged2e}  
\usepackage{mathrsfs}
\usepackage{physics}
\usepackage{mwe}
\usepackage{pifont}

\DeclareMathAlphabet{\mathbcal}{OMS}{cmsy}{b}{n}
\DeclareMathOperator{\sgn}{sgn}
\usepackage{soul}
\usepackage{pifont}
\newtheorem{customthm}{Theorem}

\usepackage{tikz}

\begin{document}

\title{Collective purification of interacting quantum networks beyond symmetry constraints}

\author{Saikat Sur}
\affiliation{Department of Chemical and Biological Physics \& AMOS,
Weizmann Institute of Science, Rehovot 7610001, Israel}
\email{saikats@imsc.res.in (corresponding author)}
\affiliation{Optics \& Quantum Information Group, The Institute of Mathematical Sciences, HBNI, CIT Campus, Taramani, Chennai 600113, India}

\author{Pritam Chattopadhyay}
\affiliation{Department of Chemical and Biological Physics \& AMOS,
Weizmann Institute of Science, Rehovot 7610001, Israel}
\author{Arnab Chakrabarti}
 \affiliation{Department of Physics, Rajiv Gandhi University,
 Rono Hills, Doimukh - 791112, Arunachal Pradesh, India}
\affiliation{Department of Chemical and Biological Physics \& AMOS,
Weizmann Institute of Science, Rehovot 7610001, Israel}

\author{Nikolaos E. Palaiodimopoulos}
\affiliation{Institute of Electronic Structure and Laser, FORTH, GR-70013 Heraklion, Crete, Greece}

\author{\"{O}zg\"{u}r E.~M\"{u}stecapl{\i}o\u{g}lu}
\affiliation{Ko\c{c} University, Department of Physics, Sar{\i}yer, Istanbul, 34450, T\"urkiye} 
\affiliation{TÜBİTAK Research Institute for Fundamental Sciences (TBAE), 41470 Gebze, T\"urkiye}

\author{Amit Finkler}
\affiliation{Department of Chemical and Biological Physics \& AMOS,
Weizmann Institute of Science, Rehovot 7610001, Israel}

\author{Durga Bhaktavatsala Rao Dasari}
\affiliation{3rd Institute of Physics, University of Stuttgart, ZAQuant, Stuttgart, Germany}

\author{Gershon Kurizki}
\email{gershon.kurizki@weizmann.ac.il}
\affiliation{Department of Chemical and Biological Physics \& AMOS,
Weizmann Institute of Science, Rehovot 7610001, Israel}
\date{\today}
\begin{abstract}
\begin{center}
    \textbf{ABSTRACT}
\end{center}
Following any quantum information processing protocol, it is essential to reset a mixed state of a many-body \textit{interacting} spin-network to the computational-zero pure state. This task is challenging, both theoretically and experimentally, because of the quantum correlations. There is currently no effective cooling strategy for both high and low temperatures in such networks. Here we put forth a universal cooling strategy for multi-spin interacting networks. The strategy is based on the \textit{collective coupling} of the system to an ancilla spin that intermittently dumps part of its entropy into an ultracold bath. Yet this strategy should overcome the \textit{symmetry-imposed} correlations that impede the cooling. To avoid the prohibitive complexity of computing the dynamics, we resort to graph analysis of the network. 
We show that a \textit{unique choice} of alternating, non-commuting system-ancilla interaction Hamiltonians exists that breaks the symmetry constraints and allows the network to approach the desired pure state. We illustrate this universal purification strategy in diverse experimental settings.
\end{abstract}

\maketitle

\section{Introduction}
 Quantum information processing (QIP), be it for quantum computing~\cite{loss_1998,benjamin2003quantum}, quantum simulation~\cite{yamamoto2017coherent,johnson2011quantum,king2022coherent,georgescu_2014}, or quantum communication~\cite{wang,kimble2008,bose}, relies on many-body entanglement generation and distribution in the system (alias network)~\cite{subra1,manmana,chiara,zwick_2014,PhysRevResearch.5.033214}. All such tasks require controlled interactions between individual constituents of the system, commonly qubits (spin-$1/2$ systems). In order to achieve high fidelity, the entangling interactions must favorably compete with decoherence~\cite{kurizki_2015_pnas,petruccione,kurizki_kofman}. Yet the decoherence challenge is not the only one faced by QIP: once the task has been completed, the many-body state must be disentangled or read out and reset/cooled down to the initial, preferably pure, state prior to the commencement of the next task~\cite{miguel_2018,oftelie_prxq_2024}. This undertaking may be as challenging, because the many-body state to be reset is at least partially mixed while the interactions and/or correlations between the constituents persist~\cite{manmana,chiara,zwick_2014}. 

One might be tempted to contemplate a passive network-resetting approach: simply wait for the network to cool off by putting it in contact with a very cold bath, say, by radiative cooling. However, such cooling produces after few $T_1$ periods a low-temperature thermal distribution of eigenstates, whereas exclusively populating the ground state (just like any other pure state) is prohibited by the Third Law~\cite{nernst_2}. Even achieving close proximity to the ground state by passive cooling is practically prohibitive in many systems and not well-suited for quantum technologies:  
Passively waiting for an approximate reset 
in solid-state nuclear spin ensembles coupled to e.g., nitrogen vacancy (NV) impurities embedded in diamond~\cite{Schirhagl2014_NV_review} takes incredibly long and is thus not an option. Even in superconducting qubits, 
a near-ground-state reset through passive decay would take several milliseconds \cite{Bland2025}, orders of magnitude longer than typical gate or measurement times. In scalable quantum processors or quantum networks operating via repeated measurements or feedback, such passive resetting periods drastically limit throughput, stability, and temporal synchronization of qubits~\cite{Chae2024_qubit_init_review, Krantz2019_quantum_engineer_guide}: since, according to the Third Law, passive relaxation is a stochastic process, it does not guarantee a return to the ground state, resulting instead in residual excitations that accumulate as initialization errors over repeated runs.

{\color{black} Hence, active reset protocols that may enable deterministic and rapid purification of multi-qubit quantum states are essential prerequisites for scalable and high-fidelity QIP operations involving quantum error correction~\cite{devitt2013quantum} and algorithmic resets~\cite{boykin_pnas_2002}.}

Active purification (alias polarization) is acutely needed also in permanently correlated multi-spin systems, such as large molecules that are thermalized by their environment and therefore have to be ``hyper-polarized" in order to yield coherent signals, as required in NMR or MRI~\cite{fel1997multiple,bonizzoni2024quantum}. 
NMR literature~\cite{emsley_book, bernstein_1957,feng_2012,kuzyk_2006}  has investigated the connection of molecular symmetry to hyperpolarization, mainly experimentally. However, to our knowledge, a systematic theoretical analysis of the interplay between symmetry, topology, and polarization or cooling in general spin networks is still lacking, and the governing principles are not yet fully understood. As we show here, traditional arguments based on permutation-group representations~\cite{levit_2016} may not be sufficient to determine polarization bounds in general spin networks, where additional and more subtle structural factors can become relevant.

Despite the major importance of the problem, a comprehensive and broadly applicable strategy for addressing the general question remains to be established: How does one actively cool down an interacting many-body quantum system (network) close to a pure state, and what determines the efficacy and speed of such cooling?

Theoretically, the purification/or cooling of interacting multi-spin systems is mostly uncharted terrain. Instead, theory has primarily dealt with the dissipative cooling of \textit{non-interacting} spin ensembles~\cite{njp_2020,abiuso_qst_2024} or single qubits~\cite{tal_mor_pra_2016,boykin_pnas_2002}. 
The dynamics of such processes has commonly been simplified by semiclassical~\cite{rousochatzakis_2018} or master-equation~\cite{petruccione} approaches which do not keep track of many-body quantum correlations despite their crucial role, as shown here. Consequently, such approaches fail as zero temperature is approached~\cite{kolar_2012}.   Experimentally, cooling of multi-spin systems~\cite{tan2017quantum} has had only partial success~\cite{zwick_2014} or faced the obstacle, as in nuclear spin ensembles, that these spins cannot be directly cooled since they are decoupled from external phonon or radiation baths.  

Here we put forth a general strategy based on \textit{collective cooling} or purification of interacting $N$-spin systems (networks)~\cite{villazon_2020} via a single ancillary qubit~\cite{schlipf_2017} that intermittently couples to and decouples from the system and repeatedly dumps part of the system collective entropy into a very cold (e.g., radiative) bath~\cite{durga_et_al}. The ancilla cooling to its ground state completes the cycle, which then recommences with the ancilla-system recoupling. The cycles constitute recursive dynamical maps, fully allowing for quantum correlations. The goal is to bring the system towards the computational zero state $|000\cdots 0\rangle$, alias the ferromagnetic ground state (FGS) $|\downarrow \downarrow \downarrow \cdots \downarrow\rangle$, as rapidly as possible.

Notwithstanding its apparently universal and minimalistic character, such a purification strategy faces a principal difficulty inherent in interacting multi-spin systems, namely, \textit{quantum correlations} among the spins that arise if the system has any spatial or spectral symmetry. As argued here, these \textit{symmetry-imposed quantum correlations severely hinder the system purification by the ancilla: they prevent equilibration of the system}, and thus invalidate the eigenstate thermalization hypothesis (ETH) which is presumed universal~\cite{deutsch_eth_2018,rigol2008thermalization,cramer2008exact}. We address this difficulty by posing and answering two complementary fundamental questions:

A) \textit{How does the achievable purification (polarization) of the $N$-spin network, measured by its state purity $\text{Tr} (\rho^2_N)$, depend on the network+ancilla geometry and topology}?
As has been shown for \textit{non-interacting} multi-spin networks~\cite{njp_2020}, \textit{collective ancilla-system coupling} may cause the system spins to evolve into a thermal mixture of \textit{conserved multi-spin observables}, which hinders their polarization.   Yet, it is unclear how (if at all) can such considerations be extended to \textit{interacting spin networks}. This requires the \textit{diagonalization} of their many-body Hamiltonian, which is \textit{exponentially} time-consuming as a function of the network size, and thus appears to pose an insurmountable obstacle~\cite{balasubramanian_1995}. 

To bypass this obstacle,  we here invoke \textit{graph theory} to infer the \textit{symmetry-limited}~\cite{cao_2} steady-state polarization values of the \textit{network polarizability} $\text{Tr}(\rho^2_N)$. This approach is shown to enormously simplify the calculations: the exponentially difficult solutions of the quantum evolution of arbitrarily interacting spin networks that possess symmetries are mapped onto solutions of much lower polynomial complexity, which yield simple bounds of the network polarizability. Graph theory has been previously employed to find the ground-state properties of large-spin systems with general topologies in the thermodynamic limit~\cite{hirjibehedin2006spin,khajetoorians2012atom,spinelli2014imaging,searle_2024}. Here, graph theory reveals symmetry-imposed polarizability bottlenecks in complex spin networks of \textit{any size}, with either short-or long-ranged (non-nearest neighbor) interactions~\cite{kurizki_kofman, shahmoon_2016} that are described by graphs with any connectivity,  not only bipartite graphs~\cite{ross_2010}. 

B) \textit{Can the symmetry bottlenecks be overcome, and if so, how?  }
The answer we give is positive: we propose a universal strategy that successfully overcomes the symmetry bottlenecks by resorting to \textit{uniquely chosen}, alternating, \textit{non-commuting} system-ancilla interaction Hamiltonians. This procedure, as we prove, leads to complete purification.

In what follows, we first outline the collective purification hindrances due to symmetry (Sec. II A-C), then propose a general strategy to bypass them, leading the system to the FGS (Sec. IID). We finally illustrate the general strategy for analytically solvable cases of the generic Heisenberg spin chains where purification is shown to approach the FGS (Sec. IIE) and examine experimentally feasible platforms (Sec. IIF). The results and the outlook are discussed in Sec. III.

\begin{figure*}
   \includegraphics[width=0.40\textwidth]{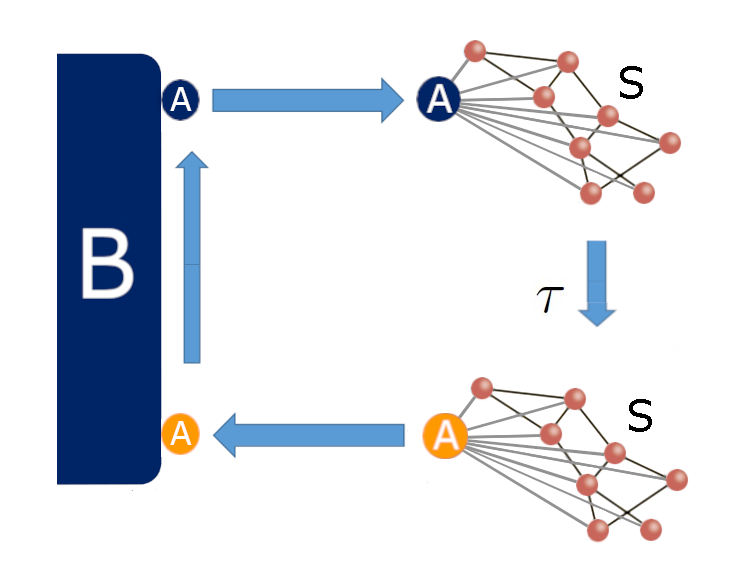}

  \caption{ \textbf{Purification Protocol:} Schematic diagram of the purification protocol: Interacting spin network cooling/purification via collective swapping of the network ($S$) entropy with an ancilla qubit ($A$) in recurring cycles. The ancilla is intermittently reset/purified by an ultracold (ideally, zero-temperature) bath $(B)$.}
  \label{fig:circuit} 
  \end{figure*}

\section{Results}

\subsection{Scenario}

Consider a system ($S$) formed by an arbitrarily interacting network of spin-$1/2$ objects that is coupled to an externally controlled ancilla ($A$). The setup is described by the Hamiltonian

\begin{equation}
\hat H(t) =  \hat H_{S} +  \hat H_{A}(t) +  \hat H_{SA}(t),  
\end{equation}
The system Hamiltonian $S$ has the highly general form
\begin{equation}
\hat H_{S} = \sum_{i,j} \mathbcal{J}_{ij}(\hat \sigma_i^x\hat \sigma_{j}^x+\hat \sigma_i^y\hat \sigma_{j}^y + \Delta \hat \sigma_i^z\hat \sigma_{j}^z), 
\label{network_hamiltonian}
 \end{equation}
where $\mathbcal{J}_{ij}$ denote the dipole-dipole couplings  between any two $(i,j)$ spins $i,j=1,2,...,N$, $\sigma^\alpha_i (\alpha 
 = x,y,z)$ being their Pauli spin operators, and $\Delta$ is the network anisotropy parameter corresponding to the bound state energy of two down (up) spins. This Hamiltonian allows for non-nearest neighbor interactions~\cite{thiru,shahmoon_2016,nick,defenu2023long,Pritam2024QST}. The network geometry may be spatially periodic or aperiodic in one, two, or three dimensions, with any connectivity/topology.

The ancilla (qubit) Hamiltonian is given by
\begin{equation}
\hat H_{A}(t) = h_{A}(t)\hat\sigma^z_A,
\label{eq: probe_Hamiltonian}
\end{equation}
with  time-dependent (modulated) two-level splitting $h_A(t)$, and $A$ is coupled to the $S$ spin sites $k$ via 
 \begin{equation}
\hat{H}_{SA}(t) = \frac{1}{2} \sum_k g_k(t)(\hat{\sigma}^+_A \hat{\sigma}^-_k + \hat{\sigma}^-_A \hat{\sigma}^+_k).
\label{eq: probe-spin-hamiltonian}
\end{equation}
We shall consider different $k$-site distributions of the $S-A$ couplings $g_k(t)$.

 \subsection{Steady-state purification Maps}

The network $S$ is initially in a high-temperature (nearly fully-mixed) state. A simple purification protocol~\cite{njp_2020,raghunandan_2020, villazon_2020}  is cyclic with period $\tau$. Each cycle starts with the $A$ spin initialized in the state $\vert 0 \rangle_A$, $\hat{H}_{SA}$ is quenched as the couplings $g_k(t)$ are switched on, correlating $A$ and $S$, while $\hat{H}_S$ acts constantly. Then $\hat{H}_{SA}$ is quenched again by switching off $g_k(t)$ to $0$. The spin $A$ is subsequently coupled to a cold bath where it dumps the entropy acquired during the $A-S$ interaction, so that $A$ is reset to $\vert 0 \rangle_A$ and the cycle recommences (Fig.~\ref{fig:circuit}).


The recursive dynamics is then
\begin{eqnarray}
	\rho_S^{(n+1)} = \Tr_A  (\hat{U}(\tau) \ket{0}_{AA}\bra{0}\otimes \rho_S^{(n)} \hat{U}^{\dagger}(\tau)). 
	\label{eq: map_n}
\end{eqnarray}
Here, $\rho_S^{(n)}$ and $\rho_S^{(n+1)}$ are the network  states before and after the $n$'th reset, respectively, and  the time-ordered evolution operator is $\hat{U}(\tau) = T_{\rightarrow} \text{e}^{-i\int^\tau_0 \hat{H}(t) dt}$.

To find the steady-state solution (SSS), one should diagonalize the Hamiltonian $\hat H_S$ and compute the evolution towards the SSS. However, the time complexity of this diagonalization and the time evolution is exponential in $N$, the system size, which may be prohibitively large, $O(4^{3N})$, so that for networks with more than a few spins this task is unfeasible with conventional computers.



  
The quest for feasible alternatives may exploit the fact that the  SSS of  Eq.~(\ref{eq: map_n}) are generated by the linear map (SI I)
\begin{eqnarray}
 \mathbcal{M} \rho^{(n\rightarrow \infty)}_S  \rightarrow 0.
 \label{linear_map}
\end{eqnarray}
that constitutes a set of linear homogeneous equations for the elements of  $\rho_S$. Here $\mathbcal{M} \rho^{(n\rightarrow \infty)}_S = \Tr_A  (\hat{U}(\tau) \rho_A \otimes \rho^{(n\rightarrow \infty)}_S \hat{U}^\dagger(\tau) - \rho^{(n\rightarrow \infty)}_S)$. We seek for any $\tau$  (unlike $\tau$-periodic maps) conditions for convergence to the fully-polarized $\vert 000...0\rangle$ state of $\hat{H}_S$, which is a fixed point (SSS) of the map $\mathbcal{M}$. We resort to two simplifications: 

(a) We show that the network polarizability for any $\hat{H}_S$ is determined by the SSS multiplicity, which can be inferred from corollaries of the Rouché-Capelli (RC) theorem~\cite{capelli_1892}:\\

\noindent RC Corollary 1. The FGS is the only SSS of $\mathbcal{M}$ if the rank $\mathscr{R}$ of $\mathbcal{M}$ satisfies
$\mathscr{R}(\mathbcal{M}) = \mathscr{D}(\mathbcal{M})$, where $\mathscr{D}(\mathbcal{M}) =  4^N - 1$ is the $N$-spin Hilbert-space dimensionality. This is a sufficient condition for a spin network to become fully polarized irrespective of the initial state. \\  
    
\noindent RC Corollary 2. $\mathbcal{M}$ has an infinite number of SSS, including the FGS, if $\mathscr{R}(\mathbcal{M}) < \mathscr{D}(\mathbcal{M})$. Then, the spin network can be fully polarized only for certain initial states.   

(b)  The number of independent solutions (nullity) $\mathscr{N}(\mathbcal{M})$  that span the SSS subspace is given by
\begin{eqnarray}
\mathscr{N}(\mathbcal{M}) =   \mathscr{D}(\mathbcal{M}) - \mathscr{R}(\mathbcal{M}).
\label{eq:nullity}
\end{eqnarray}  
The SSS of Eq.~(\ref{linear_map}) can be expressed by $\mathscr{N}(\mathbcal{M})$  independent solutions  as 
  $\rho_S = \sum^{\mathscr{N}(\mathbcal{M})}_{i=1} p_i \rho_i$, where  $\{\rho_i\}$ span the SSS subspace with arbitrary real coefficients $\{p_i \} \in \mathbb{R}$. They correspond to the zero eigenvalue of the matrix $\mathbcal{M}$, which arises from the multiplicity $\mathscr{N}(\mathbcal{M})$ of identical rows and columns in the matrix. SSS subspaces remain invariant under repeated $\mathbcal{M}$ applications,  which tend to polarize an initially thermal network state. Hence, the SSS dimensionality $\mathscr{N}(\mathbcal{M})$ determines the network polarizability value (Fig.~\ref{fig:2}). The task is then to find the rank $\mathscr{R}(\mathbcal{M})$ which discloses the nullity $\mathscr{N}(\mathbcal{M})$ according to Eq.~(\ref{eq:nullity}).

  Our first general observation, proved by Theorem 1 (Methods 1) is: \\
  
  Theorem 1: Let $\rho^S \equiv \{\rho_1, \rho_2, ..., \rho_N\}$ be a set of steady states generated by map $\mathbcal{M} $:
\begin{equation}\label{A:e1}
 \mathbcal{M}  \rho_i = \rho_i \mbox{~~for~~} i  = 1,2,...,N  
\end{equation}
Then the system will converge to the particular steady state with the same symmetry as the initial state. The different steady states $\rho_i$ depend on the basins of attraction of the map, but may also be degenerate.\\    

Hence, given a symmetric initial state, such as the fully mixed thermal state, a symmetric map only yields a symmetric steady state, and not any of the asymmetric steady states in the SSS subspace, even if they are all degenerate.
  
{\color{black} Although much less demanding than $\hat H_S$ diagonalization, finding the rank $\mathscr{R}(\mathbcal{M})$ (in Eq.~\eqref{eq:nullity}) of the $(4^N -1)$ dimensional matrix representation of $\mathbcal{M}$ requires iterative convergence to its row-echelon form. This procedure may fail even for $N \ge 3$ due to inherent inaccuracy caused by the complex matrix representation of $\mathbcal{M}$. This difficulty motivates the alternative graph-theory approach introduced below.}

 \begin{figure*}
\begin{center}
 \includegraphics[width=0.84\textwidth]{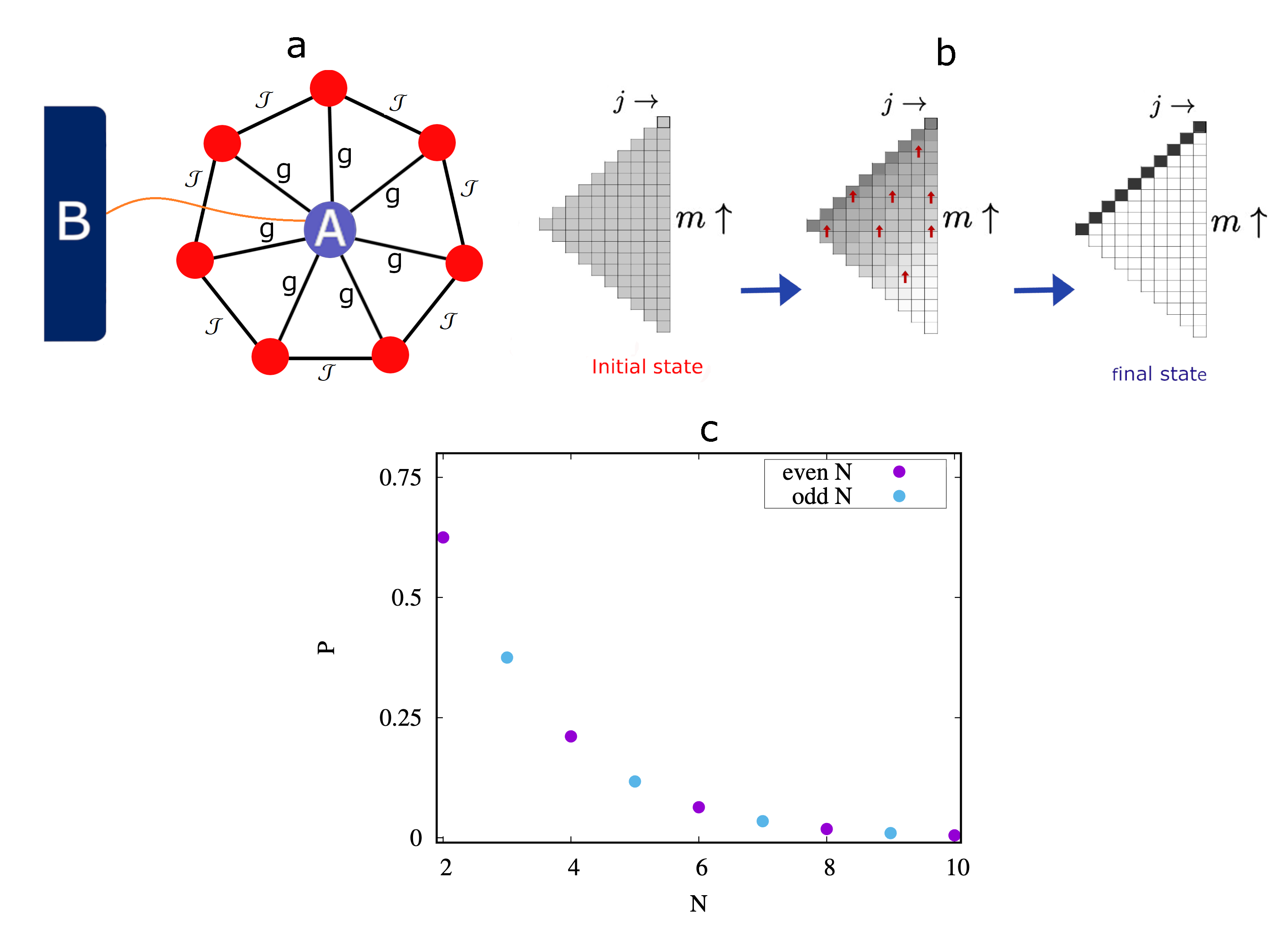}
\end{center}
 \caption{{ \textbf{Angular-momentum constraints on purification:} Angular-momentum symmetry constraints on purification in the star or the equivalent isotropic Heisenberg-chain model: (a) A Scheme that depicts a system $S$ of either isolated or identically-coupled spins that are purified via the ancilla spin $A$ which is repeatedly reset by the cold bath $B$. (b) Block purification scheme of $S$ in angular momentum $j-m$ basis. In the initial fully mixed state, all $j-m$ states in $S$ have equal probabilities, thus all blocks have the same color. The goal is to concentrate the probabilities at the FGS state via $S-A$ resonant transfer (RT). The RT excitation exchange proceeds vertically, from $m$ to $m+1$ in each $j$ block, and is thus confined to conserved-$j$ subspaces (creating symmetry bottlenecks) for purification. (c) Polarizability as a function of the number of spins $N$ under angular-momentum symmetry constraints according to Eq.~\eqref{eqnnP}. }}
      \label{fig:DRT11}
  \end{figure*}

\begin{figure*}
 \begin{center}
\includegraphics[width=0.90\textwidth]{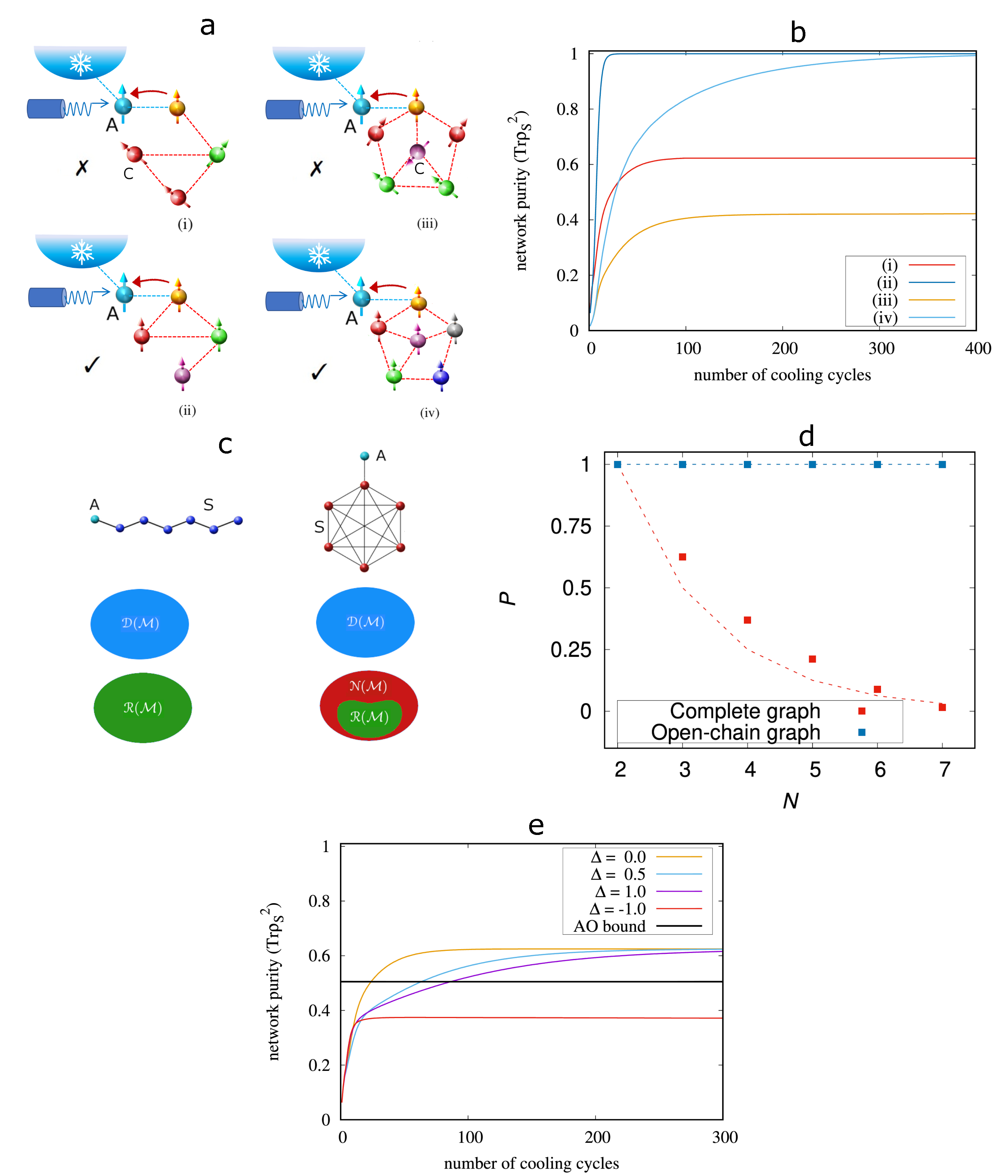}\\
 \end{center}
 \caption{{\textbf{Automorphism constraints on purification:} Polarizable and unpolarizable networks under graph automorphism constraints: (a) Polarizability (\ding{52}) or non-polarizability (\ding{55}) of some representative networks (i)-(iv) via collective entropy swapping with a probe (ancilla) spin $A$ that is intermittently coupled to a cold bath. Network polarizability is obtained by graph-theoretic considerations regarding their automorphism orbits (AO).  Nodes that belong to the same AO, are colored with the same color in the graph, whereas different colors divide the nodes into topologically equivalent sets. Visual inspection of network (i)-(iv) suffices to determine their polarizability bounds. (b) Numerically calculated network purity $\text{Tr~}\rho^2_S$ as a function of the number of ancilla-resets for the networks (i)-(iv) shown in (a).  The calculations confirm our prediction that full purification (polarization) is only achievable for networks with non-degenerate automorphism orbits (AO). (c) left- A polarizable network $N$ coupled to an ancilla $A$ represented by an identity (open-chain) graph for which the rank is equal to the dimensionality  ($\mathscr{R}(\mathbcal{M}) = \mathscr{D}(\mathbcal{M})$), right- an unpolarizable network for which $\mathscr{R}(\mathbcal{M}) < \mathscr{D}(\mathbcal{M})$. (d) Estimated purity versus spin number $N$ for open-chain graphs and complete graphs. Complete graphs (red dotted line) have maximal $\mathscr{N}(\mathbcal{M})$ (Eq.~(\ref{eq:nullity})) and hence the lowest polarizability. (e) Same as (b) for network (i) with different anisotropy $\Delta$ parameters.}} 
 \label{fig:2}
 \end{figure*}

\subsection{Symmetries that prohibit polarization}

As shown in Sec. IIB, in order to fully purify a state of an $N$-spin system ($S$), we must avoid steady-state multiplicity/degeneracy. A central insight is that the condition for steady-state multiplicity corresponding to $\mathscr{R}(\mathbcal{M})<\mathscr{D}(\mathbcal{M})$ is the symmetry of $\mathbcal{M}$ as it acts on $\rho_S$ in Eq.~\eqref{A:e1}. This symmetry may have different origins: it may pertain to the angular momentum of collective spin variables (collective magnetization)~\cite{njp_2020}, to their spatial or topological order~\cite{rudolph_pra_2004}, or to the spectral symmetry of the eigenvalues~\cite{mieghem_2010}, as discussed below. 

\textcolor{black}{It is an open question whether the different symmetries are related to each other and what role they play when two or more symmetries are present simultaneously in the system. In what follows, we focus on each symmetry separately, so as to understand their role in constraining the polarization of a spin network.}


Angular momentum symmetry: The role of angular momentum symmetry can be understood by considering the case of isolated spins or the isotropic Heisenberg  Hamiltonian of $N$ spins. The isotropic Heisenberg model corresponds to the choice $\Delta = 1$ in Eq.~\eqref{network_hamiltonian}. This isotropic Heisenberg model reduces to the isolated-spin (star) model, if the coupling strength between the $S$ spins is negligible compared to that in the $S-A$ interaction, i.e., $\mathcal{J} << \vert g_k \vert$. 

To demonstrate the angular-momentum symmetry effect, we consider a one-dimensional spin chain coupled to an ancilla with equal couplings (Fig.~\ref{fig:DRT11}a). This implies all the couplings are zero except for  $\mathcal{J}_{i,i+1} = 1$ in $\hat{H}_S$ and $g_k = g$ for all $k$ in $\hat{H}_{SA}$. In these cases, the joint dressed eigenstates of the spin-$1/2$ $A$ and the multispin $S$ networks are naturally represented in the basis of their combined angular momentum $\vec{J} = \vec{s} +\vec{S}$, $\vec{s}$ and $\vec{S}$ being the respective angular-momentum operators of $A$ and $S$. These eigenstates are labeled by the total angular-momentum value $j$ and the $z$-axis magnetization $m$, as well as by the degeneracy index $\alpha$. The dynamics governed by $\hat H_{SA}$ in Eq.~\eqref{eq: probe-spin-hamiltonian} (that cause resonant $S-A$ transfer (RT)) is able to break the collective magnetization symmetry of $S$, i.e., $[\hat{H}_S, \hat{S}_z] \ne 0$, but the total $S+A$ magnetization associated with $\vec{J} = \vec{s} +\vec{S}$ is conserved, i.e., $[\hat{J}^2, \hat{H}_{SA}] = 0$. Hence, the requirement of SSS, $[\mathbcal{M},\hat{\Pi}_i] \ne 0$, is not satisfied, where $\hat{\Pi}_i$ are all possible symmetry operations.    

Consequently, the time-evolution operator corresponding to RT cannot mix states with different $j$.
At the $n~$th cycle, the populations of the eigenstates  $p^{(n)}_{\alpha, j,m}$ satisfy the recurrence relations (see SI IV)
 \begin{eqnarray}
   &&p^{(n+1)}_{\alpha,j,m} = p^{(n)}_{\alpha,j,m} \cos^2(J^+_{j,m} \tau) +  p^{(n)}_{\alpha,j,m-1} \sin^2(J^+_{j,m -1} \tau), \nonumber\\
   && J^+_{j,m} =  \sqrt{(j - m) ( j + m + 1)}.
   \label{eq: recurrence}
 \end{eqnarray}
These recurrence relations show that a flow of probabilities $p^{(n)}_{\alpha, j,m}$ takes place within each $j$ sector from lower to higher values of $m$ (Fig.~\ref{fig:DRT11}b). Hence, the blocks with $(j,m=j)$ remain invariant under RT and cause a bottleneck effect that impedes further purification.
Each block with particular $(j,m)$ values retains a certain fraction of the probability $p^{(n)}_{\alpha,j,m}$  and transfers the rest to the block corresponding to $(m+1)$ with the same value of $j$. The block with the lowest $m$  value gains nothing, while the block with the highest value of $m = j$ accumulates all probabilities coming from the lower $m$ blocks (Fig.~\ref{fig:DRT11}b). 
As $n \rightarrow \infty$,  the probabilities accumulate at the highest $m = j$ blocks of each $j$ value. Importantly, this consequence of quantum correlations reflects non-equilibration in the chain during the closed-system evolution stage, contrary to the ETH~\cite{deutsch_eth_2018}. 
Therefore, even after $n\rightarrow \infty$ cycles, the entropy is still high (see SI V), so that RT completely fails to achieve purification.

This bottleneck effect can also be predicted from Theorem~1, (Methods 1), since the system and the ancilla combined have a rotational symmetry. Therefore, the spins in the asymptotic limit $N>>1$ have identical density matrices, so that the steady state of the system can be described only by $(N/2 + 1)$ independent real numbers for even $N$ and $(N+1)/2$ for odd $N$. 

In this case, the steady-state value (asymptotic in time) of the polarization can be given by the exact form (see SI V),
\begin{eqnarray}\label{eqnnP}
 P &=&  \frac{(2N+1)}{4^N} \binom{N}{\frac{N}{2}}, {~~~}\text{even~}N, \nonumber\\
  &=& \frac{(N+3)}{4^N} \binom{N+1}{\frac{N-1}{2}}, {~~~}\text{odd~}N. 
\end{eqnarray}
In the thermodynamic limit $N>>1$, the steady state polarization in Eq.~\eqref{eqnnP} scales as $P \approx \sqrt{N}\exp(-N/2)$ with the system size (Fig.~\ref{fig:DRT11}c).



\begin{figure*}
 \begin{center}
 \includegraphics[width=0.63\textwidth]{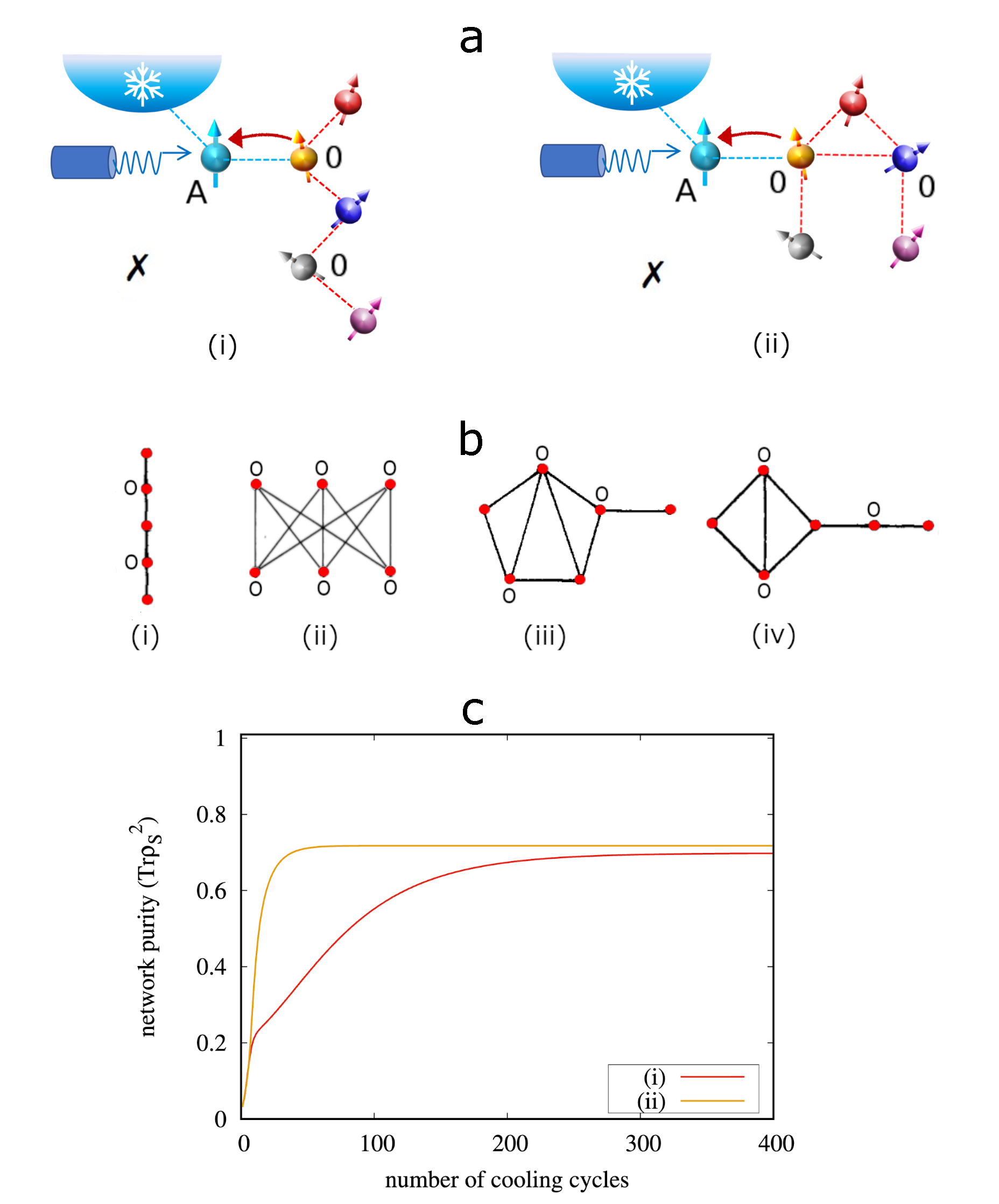}
 \end{center}
 \caption{{\textbf{Spectral symmetry constraints on purification:} Networks where spectral symmetry (SPS) hinders full polarization. (a)  The nodes where the support of the eigenvector(s) corresponding to the null subspace is zero, are marked with $0$. (b) SPS effects in diverse graph classes: (i) Path graph $P_5$ ($N=5$), (ii) complete bipartite graph $K_{3,3}$ ($N=6$), (iii) an identity graph ($N=5$), (iv) a graph with a non-identity nontrivial AO (mirror symmetry) ($N=6$). Nodes marked with `O' denote null support of the kernel. (c)  Numerically calculated time dependence of network purity $\text{Tr~}\rho^2_S$ as a function of the number of ancilla-resets for the networks.  The calculations confirm that full purification (polarization) is only achievable for spin networks with non-degenerate automorphism orbits (AO) that also lack SPS. }}
\label{fig:Delta}
 \end{figure*}


Graph automorphism symmetry: A general class of symmetries plays an important role even in networks without regular geometries. These networks can have any dimensionality, be periodic or aperiodic, and have any connectivity among the spins.

Graph theory allows the determination of the network polarizability value without the need of finding the rank of $\mathbcal{M}$. This is accomplished by mapping the problem to that of finding the discrete symmetries of the network graph representation, since SSS subspaces are ``dark subspaces" imposed by network symmetries:  e.g., in non-interacting (star-model) spin-networks having a collective (Dicke) basis or in  Heisenberg chains~\cite{kurizki_kofman,villazon_2020,bose, nick,alvarez_2015} the initial states are mixtures of either symmetric (bright) or anti-symmetric (dark) state subspaces under permutation. The general question we raise is: how does the network symmetry affect its steady state?

The symmetry of the map in Eq.~({\ref{linear_map}}) is determined by that of $\hat{U}(\tau)$, which can be written as a power series  expansion of $\hat{\phi}(\tau) = \int^\tau_0 \hat{H}(t)dt$, and thus has the symmetry of Eq.~(\ref{network_hamiltonian}) determined by the $J_{ij}$ matrix. 
This observation leads to  a map $\mathbcal{M}$ which generates a graph and its SSS have the same symmetry as the $N \times N$ adjacency matrix $\mathbf{A}$~\cite{Diestel_springer_2006}
\begin{eqnarray}
&&\mathbf{A}(i,j) = 0, {~~~~~~} i=j\nonumber\\
&&\mathbf{A}(i,j) = J_{ij}, {~~~~~~} i \ne j, {~~}\mbox{} {~~}(i,j) \in E\nonumber\\
\label{eq:adjacency}
\end{eqnarray}

A key notion rooted in graph symmetry is the automorphism orbit (AO): a set of equivalent nodes such that the graph remains invariant under their permutation~\cite{grohe_2020,balasubramanian_1980, balasubramanian_1995}. Since map $\mathbcal{M}$ retains the symmetry of $\textbf{A}$, nodes $(i,j)$ that belong to the same AO are interchangeable under $i \leftrightarrow j$ permutations. Thus, AOs and SSS have the same multiplicity, reflecting their common permutation symmetry.  Networks where at least one AO has more than one node are not polarizable for every initial state  (see Theorem~2, Methods 2):   Therefore, a necessary condition for a graph with $N$ nodes to be fully polarized, is for it to have $N$ AOs.

Theorem~2  holds for all canonical spin-network models (the Ising, Heisenberg, XY, SYK, Hubbard models, and more) that possess discrete graph symmetries. In the map (\ref{eq:adjacency}), the ancilla $A$ must break its $SU(2)$ symmetry at the resetting stage, but the graph symmetry of $A+S$ remains intact.

To determine the graph polarizability $P \equiv \text{Tr}(\rho^2_S)$, we consider the initial states in the computational (energy) basis $|s_1,s_2,...,s_N\rangle$ where $|s_i\rangle = \vert 0\rangle_i, \vert 1 \rangle_i$. Out of  $2^N$ computational-basis states, there is a  subset that  satisfies  
\begin{equation}
{\hat{\pi}}_{ij} |s_1,..,s_i,..,s_j,..,s_N\rangle = |s_1,..,s_i,..,s_j,..,s_N\rangle,    
\end{equation}
where $\hat{\pi}_{ij}$ is the exchange operator that preserves the  AO symmetries. This subset consists of $2^K$ states, $K$ being the number of AOs. Such a subset is however only a fraction of an initial thermal mixture. The polarizability $P$ for the maximally mixed initial state is determined by the AO number (degeneracy) $K$ as (see App. 2)
\begin{equation}
P \approx \frac{1}{2^{N-K}}.   \label{eq: bound} 
\end{equation}

This estimated polarizability based on AO symmetry is one of our major results.
{\color{black} It implies that the fractional number of graphs that have $K=N$ increases exponentially with the dimensionality $N$~\cite{erdos_1963}, and the probability to polarize an arbitrary connected graph grows similarly. This fundamental result goes far beyond restating the existence of symmetry subspaces. It provides an explicit, closed-form connection between graph topology and dissipative dynamics, unlike previous formulations of symmetry-restricted cooling or algorithmic reset protocols.}

Among all possible $N$-spin network graphs, a complete graph~\cite{Diestel_springer_2006} is the one with the largest SSS. This graph has only two AOs: one that consists of a spin at node $1$ that is coupled to the probe, and another containing the remaining  $(N-1)$ spins.  Hence, a complete graph has the lowest polarizability 

\begin{equation}
P \approx 1/2^{N-2}.  
\end{equation}
It is realized by a fully connected network with identical couplings (Fig.~\ref{fig:2}c). 

By contrast, an identity  graph that has $N$ different nondegenerate AOs can attain the highest polarizability, 
\begin{equation}\label{7ceqn}
P = 1. 
\end{equation}
This graph is realized, e.g., by an open chain of spins (Fig.~\ref{fig:2}a). The polarizability for diverse graphs is plotted versus $N$ in Fig.~\ref{fig:2}(b), \ref{fig:2}(d).   Importantly, graph polarizability is necessarily determined by its information content which expresses the network diversity (see Theorem~3, Methods 2).

To illustrate these results, we present the number of AOs by distinct colors in diverse networks in Fig.~\ref{fig:Delta}a with $3 \le N \le 6$.  The dependence of the network polarizability on AO degeneracy (Methods, SI.~II, and III) is consistent with the numerics.

If anisotropy (field-bias) expressed by parameter $\Delta \ne 0$ is present in Eq.~\eqref{network_hamiltonian}, the polarizability becomes intractable, as models with $\Delta$ on graphs are unsolvable. The reason is that $\Delta$ acts as self-energy on the diagonal elements of the adjacency matrix $\textbf{A}$, whereas AO symmetry is related to its off-diagonal elements, whose importance diminishes as $|\Delta|$ grows, becoming marginal for $|\Delta| \ge 1$. Thus, AO symmetry can no longer predict the polarizability constraints for $ \Delta = -1$. In the language of graph theory, anisotropic networks are represented by graphs with self-loops. For such graphs, any prediction of its polarizability requires exact diagonalization of the adjacency matrix. Nevertheless as seen from Fig.~\ref{fig:2}(e), graphs with $0 < |\Delta| < 1$, tend to conform with the AO polarizability $P$ in Eqs.~\eqref{eq: bound}-\eqref{7ceqn}.

{\color{black} 
A local transverse field along $z$, which is an on-site potential, may correspond to self-loops with weights given by the field strength~\cite{Mulken2011_CTWQ_review,Babai2016}. 
The symmetry classification by AOs can be generalized to self-loop augmented matrices 
via the theory of vertex-automorphism groups of weighted graphs~\cite{Babai2016, Grohe2018} (Methods 4).  
}


Spectral symmetry: Among all identity graphs that encompass the network and the ancilla, there can exist graphs with other symmetries that prohibit full purification of the network starting from an arbitrary initial state, under resetting operations of the ancilla that do not break the graph symmetry. Namely, the steady states of such graphs have a dark-state (null-support) component that hinders full polarization. This means that apart from the angular-momentum (Sec. IIC1) or the graph automorphism (AO) symmetry (Sec. IIC2), a system can have more symmetries that generate dark states.  These dark states are imposed by spectral symmetry (SPS)~\cite{mieghem_2010}.  SPS in a graph is conditioned on the existence of at least one zero eigenvalue of the adjacency matrix $\mathbf{A}$,which can be confirmed by the singularity of $\mathbf{A}$, i.e., $ \mbox{Det~} \mathbf{A} = 0$. In such a graph, said to be singular~\cite{sciriha2007c}, the nodes at sites $x$ that have null support correspond to the null solutions of $N$ linear homogeneous algebraic equations, $\mathbf{A}  \textbf{x} =  \textbf{0}$.  According to Theorem~4 (Methods 3) and Theorem~5 (Methods 3), in graphs possessing SPS, the ancilla does not break the symmetry of the null subspace of the original graph and hence is unable to fully polarize the network.

Graph theory and linear algebra are unable to connect  SPS and AO symmetries as they cannot answer: a) What are the conditions for a graph to have a zero eigenvalue in the spectrum? b) If the graph has a zero eigenvalue, does this eigenvalue have zero support? c) If there is zero support, at which node is it?  The spectrum of an arbitrary graph can be found by diagonalizing the adjacency matrix~\cite{mieghem_2010}, and there are no general alternative recipes for answering questions (a-c).

Still, for certain graph classes the existence of a zero eigenvalue can be predicted from the global spatial symmetry:  (i) The eigenvalues of a path graph with $n$ vertices are given by $E_j = 2 \cos \frac{\pi j}{n+1}, j = 1,...,n$. Hence, it has one zero eigenvalue for odd $n$. This class pertains to the singly-excited collective Dicke states of a 1d lattice (open periodic chain) of two-level systems, which include a dark (subradiant) state and $(n-1)$ bright (superradiant) states~\cite{rudolph_pra_2004}. (ii)  A  complete bipartite graph $K_{m,n}$, composed of two sets of nodes of sizes $n$ and $m$, respectively,   has a spectrum with  eigenvalue zero of multiplicity $(n -2)$, and two non-zero eigenvalues $E_{\pm} = \pm\sqrt{mn}$. 

As seen from Fig.~\ref{fig:Delta}, we can have an identity or a non-identity graph with a null subspace; implying that SPS and AO symmetries are indeed generally unrelated.
From numerical searches we deduce that identity graphs with SPS occur only for $N \ge 5$ (Fig.~\ref{fig:Delta}) and are rare under realistic spin-spin interactions:  their occurrence probability is of measure zero for $N>>1$, hence their practical significance is marginal.

Nevertheless, it is important that SPS may prohibit the polarization of a system of isolated spins connected to an ancilla via arbitrary (even random) couplings by resonant transfer (RT), as further discussed in Sec. IID. 


 \begin{figure*}
\begin{center}%
  \label{}
  \includegraphics[width=0.69\textwidth]{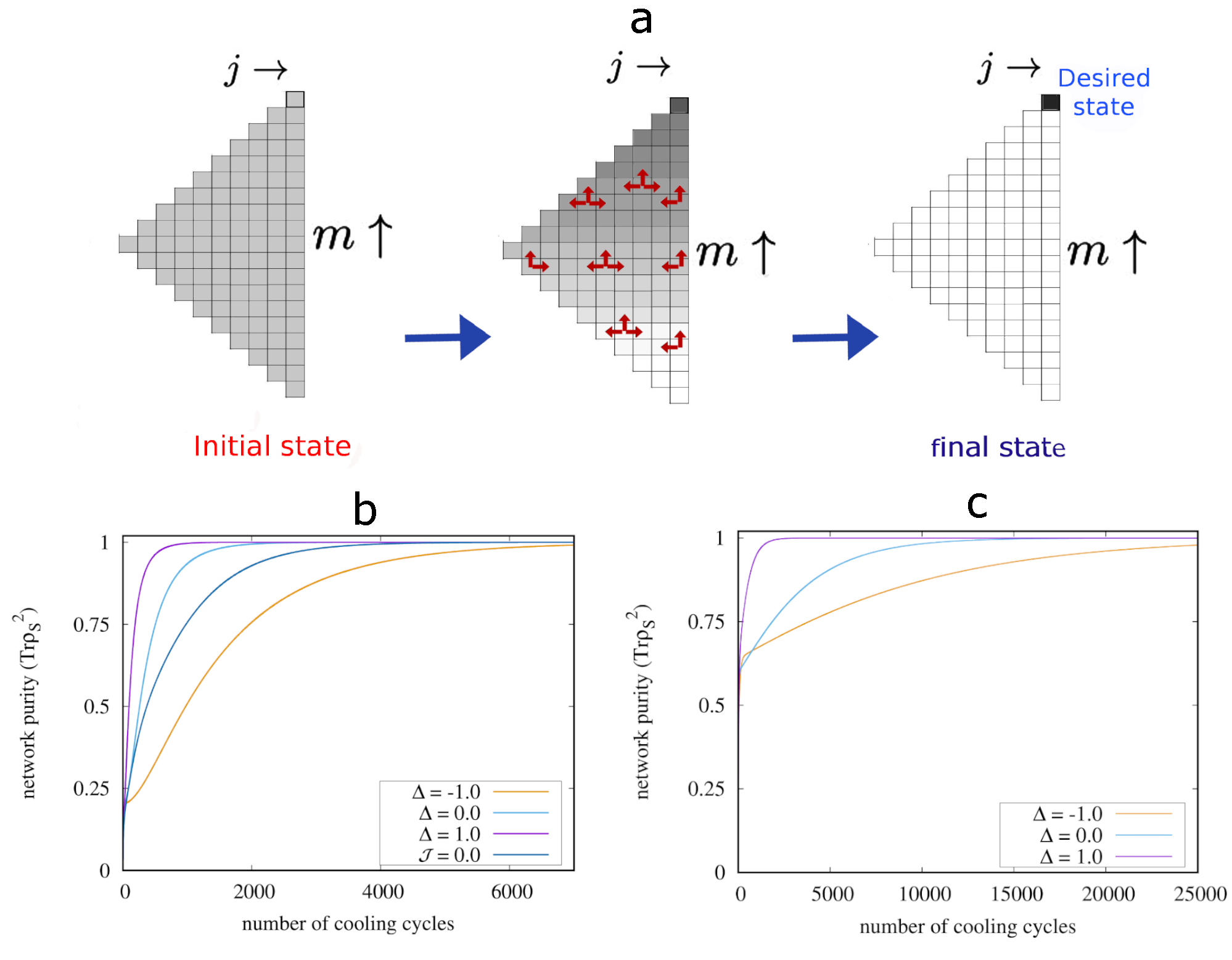}
\end{center}
 \caption{{\textbf{Purification using ADRT protocol:} (a) Schematic representation of the ADRT purification protocol for a star model: a system $S$ of isolated spins via the ancilla spin $A$, showing its overwhelming ability to overcome symmetry constraints/bottlenecks compared to RT in Fig.~\ref{fig:DRT11}. In the ADRT protocol, the excitation exchange takes place both horizontally and vertically (i.e., along $m$ and $j$), thus mixing all $j$-blocks. This allows us to achieve the desired final state. (b) The variation of the network purity with the number of cycles for the isolated spin model and the Heisenberg chain of $5$ spins with different anisotropy parameters $\Delta$. (c) The variation of the network purity with the number of cycles for the non-polarizable graph (i) shown in Fig.~\ref{fig:2}(a) with different anisotropy parameters $\Delta$. Both plots (b) and (c) show that the desired state is attained using the ADRT protocol, unlike the RT protocol used in Fig.~\ref{fig:2}(b). }}
      \label{fig:ADRT1}
  \end{figure*}

\subsection{Purification beyond symmetry constraints}

We may conclude from Sec. IIC that any symmetry leads to degeneracies~\cite{vandersypen_2005} in the matrix elements of $\rho^{(n\rightarrow \infty)}_S$ which are obtained by solving the set of linear homogeneous equations (SI V). The resulting matrix $\rho^{(n\rightarrow \infty)}_S$ is then composed of invariant blocks that are impervious to purification by $\mathbcal{M}$, if it commutes with $\hat{H}_S$. Can one overcome this severe constraint on collective purification?

Our proposed universal solution to this problem enables us to retain a unique steady state, i.e., to have $\mathscr{R}(\mathbcal{M})=\mathscr{D} (\mathbcal{M})$, by choosing $\hat{H}_{SA} (t)$ that breaks the symmetry of $\hat{H}_S$.
To this end, $\hat{H}_{SA}$ must be restricted by the conditions
\begin{eqnarray}
 [\mathbcal{M},\hat{\Pi}_i] \ne 0 ~~\forall i \mbox{~~iff~~} [\hat{H}_{SA},\hat{\Pi}_i] = 0. 
 \label{condition_2}
\end{eqnarray}
where $\hat{\Pi}_i$ are all possible symmetry operations under which the $\hat{H}_S$ remains invariant. 

The purification strategy must lift the symmetry of magnetization conservation by $\hat{H}_S$ along one or more orthogonal axes, so as to mix the subspaces (blocks) of $\rho_S$ that are invariant as long as
\begin{equation}
    [\mathbcal{M}, \hat H_S]=0
\end{equation}

Generic many-body spin Hamiltonians that conserve the total magnetization along one of the spin axes, say in $z$ direction, include the broad class of anisotropic Heisenberg chains, the long-ranged non-nearest-neighbor resonant dipole-dipole interaction (RDDI)~\cite{corzo2019waveguide,Gershon,defenu2023long,opatrny2023nonlinear,thiru}, the spin-fermion Hubbard model~\cite{PhysRevLett.3.77,PhysRevLett.65.2462} and more.

We now propose a purification strategy that can lift all symmetry constraints via three consecutive steps in each cycle:

Step 1: Resonant $S-A$ transfer: An interaction Hamiltonian that causes resonant transfer (RT) between $S$ and $A$ is activated from the beginning to the middle of the cycle in the form (see Eq.~\eqref{eq: probe-spin-hamiltonian})~\cite{hartmann_hahn_1962} 
\begin{eqnarray}
\label{eqn10g}
 \hat{H}^{res}_{SA} (t)= \frac{1}{2} \sum^N_{k=1} g_k(\hat{\sigma}^+_A \hat{\sigma}^-_k +\hat{\sigma}^-_A \hat{\sigma}^+_k), \,\, n\tau\leq t\leq n\tau +\tau/2.\nonumber\\
 \label{drt_a}
\end{eqnarray}
This RT maximizes the excitation exchange (swap) and guides $S$ toward the FGS state, but it generally fails to break all the symmetries in the $\hat{H}_S$-Hamiltonian. It thereby precludes full purification because each symmetry gives rise to invariant blocks of $\rho_S$, limiting the polarizability of $S$. To surpass it, we have to resort to the next step of the protocol. 

Step 2: Dispersive $S-A$ coupling : We abruptly change the $\hat H_{SA}$ say, halfway through the cycle: the joint basis (of $S$ and $A$) is suddenly rotated by $90^\circ$ with respect to the hypersphere axes, so as to render the exchange interaction off-resonant (dispersive) in the form

\begin{eqnarray}
 \hat{H}_{SA}(t) = \hat{H}^{disp}_{SA} =  {\sigma}^{z}_A \sum^N_{k=1} \tilde{g}_k \hat{\sigma}^{z}_k,~~~ n\tau + \frac{\tau}{2} < t < (n+1) \tau. \nonumber\\   
 \label{eq: H_SA_b}
\end{eqnarray}
\label{eq: H_SA}
Explicitly, the cycle employs a pair of non-commuting, consecutive interaction Hamiltonians which we dub alternate dispersive and resonant transfer (ADRT). It thereby breaks the combined $S+A$ system symmetries, in order to lift the steady-state degeneracy.

{\color{black}  Even though the maximally mixed state that we choose to initialize the protocol is uncorrelated, the subsequent application of the non-local interaction Hamiltonians $H^{res}_{SA}(t)$ and $H^{disp}_{SA}(t)$ dynamically generate inter-spin and spin–ancilla correlations. These correlations are unavoidable by virtue of the system-ancilla coupling,
which entangles the ancilla with collective excitation modes of the network. Because of symmetries present in the system, these correlations are constrained to form superpositions within symmetry-protected subspaces, resulting in correlated steady states that are not fully polarizable~\cite{Zanardi1997,Lidar1998,Albert2016, Baumgartner2008}. 

The purpose of the ADRT protocol is to break these symmetry-induced correlations by alternating resonant and dispersive operations, which generate a non-commuting Lie algebra of operators that couple otherwise disconnected subspaces. This procedure converts correlated steady states into effectively decohered populations, enabling full relaxation to the desired state.
}
 
We now present a pivotal proof that the $\hat{H}_{SA}$ in Eqs.~\eqref{drt_a} with equal $g_k = g$ and~\eqref{eq: H_SA_b} with unequal $\tilde{g}_k$ are the only choices that ensure complete purification to the FGS of any $S$ subject to any model that has the FGS state as the eigenstate.\\

{\color{black}
 \setcounter{customthm}{5}
 \begin{customthm}
 \label{theorem:6}
Consider bilinear ancilla--system coupling Hamiltonians that  
preserve the excitation, then the coupling Hamiltonians~\eqref{drt_a}  with equal $g_k = g$ and~\eqref{eq: H_SA_b} with unequal $\tilde{g}_k$ provide one experimentally realizable non-commuting pair that satisfies
\begin{equation}
(\hat{H}_{SA} + \hat{H}_S)\vert0\rangle_A\vert00\ldots0\rangle_S
= E\vert0\rangle_A\vert00\ldots0\rangle_S,
\label{eq: b1}
\end{equation}
and the symmetry-breaking condition Eq.~\eqref{condition_2}
where the desired state $\vert00\ldots0\rangle_S$ is assumed to be an eigenstate of the system Hamiltonian $\hat H_S$.
 \end{customthm}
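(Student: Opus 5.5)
We verify the two properties asserted in the statement in turn: the fixed-point (eigenstate) condition Eq.~\eqref{eq: b1}, and then the symmetry-breaking condition Eq.~\eqref{condition_2} together with non-commutativity.

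\textbf{Eigenstate condition.} Since $\vert 00\ldots0\rangle_S$ is by hypothesis an eigenstate of $\hat H_S$, with eigenvalue $E_S$, it suffices to check that each coupling leaves $\vert0\rangle_A\vert0\ldots0\rangle_S$ invariant up to a phase.
\begin{itemize}
\item For the resonant term \eqref{drt_a}, each summand $\hat\sigma^+_A\hat\sigma^-_k$ annihilates the state because $\hat\sigma^-_k\vert0\rangle_k=0$. Each $\hat\sigma^-_A\hat\sigma^+_k$ annihilates it because $\hat\sigma^-_A\vert0\rangle_A=0$. This holds for any $g_k$.
\item The dispersive term \eqref{eq: H_SA_b} is diagonal in the computational basis. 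It gives eigenvalue $-(-1)\sum_k\tilde g_k$, up to the sign convention for $\sigma^z$.
\end{itemize}
Hence Eq.~\eqref{eq: b1} holds in both half-cycles. The map $\mathbcal{M}$ of Eq.~\eqref{linear_map} therefore has the FGS as a fixed point. The reason is that $\hat U(\tau)$ acting on $\vert0\rangle_A\vert0\ldots0\rangle_S$ only produces a phase, and tracing out $A$ returns $\vert0\ldots0\rangle\langle0\ldots0\vert$.

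For the ``only choice'' aspect, we would note the following. A general bilinear excitation-preserving coupling has the form $\sum_k g_k\hat\sigma^+_A\hat\sigma^-_k+\mathrm{h.c.}$ plus $\sigma^z_A\sigma^z_k$ and $\sigma^z_A$ or $\sigma^z_k$ terms. Any term containing $\sigma^+_A\sigma^+_k$ or $\sigma^x\sigma^x$ would violate excitation preservation and destroy Eq.~\eqref{eq: b1}. The admissible Hamiltonians are thus exactly the span of these two families.

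\textbf{Symmetry breaking.} Let $\{\hat\Pi_i\}$ be the symmetries of $\hat H_S$: site permutations in the automorphism group (Theorem~2), total-$\vec S$ rotations, and the spectral dark subspaces (Theorems~4--5).
\begin{enumerate}
\item With equal $g_k=g$, $\hat H^{res}_{SA}$ couples $A$ to the collective mode $\hat S^{-}=\sum_k\hat\sigma^-_k$. By Eq.~\eqref{eq: recurrence}, it moves population efficiently along $m$ inside each $j$-block. It commutes with every permutation and with $\hat{\vec J}^2$, so on its own it fails to purify.
\item $\hat H^{disp}_{SA}$ with pairwise distinct $\tilde g_k$ commutes with no nontrivial permutation $\hat\pi_{ij}$, since $\tilde g_i\ne\tilde g_j$. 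It also fails to commute with $\hat{\vec S}^2$, because $[\sum_k\tilde g_k\sigma^z_k,\hat S^\pm]\propto\sum_k\tilde g_k\sigma^\pm_k$, which is not collective. It therefore mixes different $j$-blocks and different automorphism orbits.
\item Non-commutativity then follows directly:
\[
[\hat H^{res}_{SA},\hat H^{disp}_{SA}]\propto g\sum_k\tilde g_k(\hat\sigma^+_A\hat\sigma^-_k-\hat\sigma^-_A\hat\sigma^+_k)\neq0 .
\]
\end{enumerate}
Consequently, $\hat U(\tau)=e^{-i\hat H_2\tau/2}e^{-i\hat H_1\tau/2}$ generates, via the Lie algebra of nested commutators, the operators $\sum_k\tilde g_k^{\,p}\hat\sigma^\pm_k$ for $p=0,1,2,\ldots$.

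\textbf{Main obstacle.} The hard step is upgrading ``no common symmetry'' to $\mathscr R(\mathbcal M)=\mathscr D(\mathbcal M)$, i.e.\ uniqueness of the steady state via RC Corollary~1. We would try a Vandermonde argument. Since the $\tilde g_k$ are distinct, the $N$ operators $\sum_k\tilde g_k^{\,p}\hat\sigma^-_k$ for $p=0,\ldots,N-1$ span all individual $\hat\sigma^-_k$. The generated algebra therefore lets the ancilla extract an excitation from any site. Suppose $\rho_\infty$ were another steady state. It would then have to be annihilated by every local lowering channel, and this forces support on $\vert0\ldots0\rangle$ only.

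The delicate points are twofold. First, $\hat H_S$ may partially undo the mixing, so the argument is generic in $\tau$ and fails only on a measure-zero set of $\tau$. Second, the Lie-algebra generation must be shown to survive the discrete two-step product rather than a continuous control. Finally, the necessity claims (equal $g_k$ and unequal $\tilde g_k$) follow from the converse constructions:
\begin{itemize}
\item equal $\tilde g_k$ make $\hat H^{disp}_{SA}\propto\sigma^z_A\hat S_z$, which commutes with everything relevant and leaves the $j$-bottleneck intact;
\item unequal $g_k$ can introduce spectral-symmetry dark states (Theorem~5).
\end{itemize}
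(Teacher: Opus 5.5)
\textbf{Where you agree with the paper.} Your eigenstate check and your symmetry analysis follow the paper's proof. The paper also restricts to the excitation-preserving span $\{\hat\sigma^+_A\hat\sigma^-_k+\hat\sigma^-_A\hat\sigma^+_k,\ \hat\sigma^z_A\hat\sigma^z_k\}$. It verifies that $\hat H^{res}_{SA}$ annihilates $\vert0\rangle_A\vert0\cdots0\rangle_S$ while $\hat H^{disp}_{SA}$ multiplies it by $\sum_k\tilde g_k$. It notes that uniform couplings commute with every $\hat\pi_{ij}$ and with $\hat J^2$, and that nonuniform $\tilde g_k$ spoil both. The paper fixes equal $g_k$ by the Hartmann--Hahn resonance and common-phase assumptions plus the experimental mutual exclusion of the two couplings, not by spectral-symmetry dark states. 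Your remark that unequal $g_k$ \emph{can} create dark states proves no necessity, but the statement only asks for one admissible pair, so this is harmless.

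\textbf{The commutator is wrong.} With $\hat H^{res}_{SA}=g\sum_k(\hat\sigma^+_A\hat\sigma^-_k+\mathrm{h.c.})$, the flip-flop term conserves $\hat\sigma^z_A\hat\sigma^z_k$ on its own bond. The $l=k$ contributions therefore cancel, and
\[
[\hat H^{res}_{SA},\hat H^{disp}_{SA}]=-2g\sum_k\big(\hat\sigma^+_A\hat\sigma^-_k-\hat\sigma^-_A\hat\sigma^+_k\big)Z_k,\qquad Z_k\equiv\sum_{l\neq k}\tilde g_l\hat\sigma^z_l ,
\]
which is a three-body operator. Your expression $g\sum_k\tilde g_k(\hat\sigma^+_A\hat\sigma^-_k-\hat\sigma^-_A\hat\sigma^+_k)$ is what a local field $\sum_k\tilde g_k\hat\sigma^z_k$ would give, without the factor $\hat\sigma^z_A$. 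The difference matters in two ways:
\begin{itemize}
\item The pair actually commutes for $N=1$, which is why the paper claims non-commutativity only for more than one spin.
\item For $N\ge2$ (and $g,\tilde g\neq0$) the pair fails to commute even when the $\tilde g_k$ are uniform. So non-commutativity is not what breaks the permutation/Dicke symmetry; the nonuniformity of $\tilde g_k$ is.
\end{itemize}

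\textbf{The $\hat J^2$ step is a non sequitur.} $[X,\hat S^\pm]\neq0$ does not imply $[X,\hat{\vec S}^2]\neq0$; $X=\hat S_z$ is a counterexample. Compute it directly, as the paper does:
\[
[\hat H^{disp}_{SA},\hat J^2]=i\hat\sigma^z_A\sum_{k<l}(\tilde g_k-\tilde g_l)(\hat\sigma^y_k\hat\sigma^x_l-\hat\sigma^x_k\hat\sigma^y_l).
\]

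\textbf{Consequences for your uniqueness sketch.} Since $Z_k$ commutes with both $\hat H^{disp}_{SA}$ and $\hat\sigma^+_A\hat\sigma^-_k$, the nested commutators are $\mathrm{ad}^p_{\hat H^{disp}_{SA}}(\hat\sigma^+_A\hat\sigma^-_k)=(2Z_k)^p\hat\sigma^+_A\hat\sigma^-_k$.
\begin{itemize}
\item These have operator-valued weights whose spectra $\{\sum_{l\neq k}\pm\tilde g_l\}$ can overlap for different $k$ even when the $\tilde g_k$ are pairwise distinct (e.g.\ $\tilde g=(1,2,3)$).
\item They are not $\sum_k\tilde g_k^{\,p}\hat\sigma^\pm_k$. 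Such operators could not arise anyway, because nested commutators of excitation-preserving terms remain excitation-preserving.
\item So your Vandermonde inversion does not go through as stated.
\item The next inference also fails: that a second fixed point of the fixed-$\tau$ channel must be annihilated by every generated lowering operator. Lie-algebra generation controls reachable unitaries, not fixed points of a given CPTP map.
\end{itemize}

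None of this is needed for the statement. The paper's proof never attempts $\mathscr R(\mathbcal M)=\mathscr D(\mathbcal M)$. It concludes once the eigenstate condition and the explicit commutators $[\hat H^{disp}_{SA},\hat J^2]\neq0$ and $[\hat H^{disp}_{SA},\hat\pi_{ij}]\neq0$ are in hand. Correct the commutator (stating $N\ge2$) and do the $\hat J^2$ step by direct computation; your first two parts then establish what is claimed. The ``main obstacle'' section should be dropped rather than patched.
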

}
\vspace{2ex}

{\color{black} \noindent \textit{Proof.} 
We restrict our attention to bilinear ancilla--system interactions that preserve the total excitation number.
For each bond $(A,k)$, the most general excitation-preserving bilinear Hamiltonian is then confined to the span
$ 
\{\sigma_A^+\sigma_k^- + \sigma_A^-\sigma_k^+,\;\sigma_A^z\sigma_k^z\}.$
Accordingly, the most general excitation-preserving bilinear ancilla--system coupling takes the form
$$ 
\hat{H}_{SA}\in \mathrm{span}\{\hat{H}^{res}_{SA}(\{\phi_k\}),\hat{H}^{disp}_{SA}\}, $$
where
\begin{align}
\hat{H}^{res}_{SA}(\{\phi_k\})
&= \sum_{k} g_k
\big(
e^{i\phi_k/2}\hat{\sigma}_A^+\hat{\sigma}_k^-
+ e^{-i\phi_k/2}\hat{\sigma}_A^-\hat{\sigma}_k^+
\big),
\label{span_xy}
\\
\hat{H}^{disp}_{SA}
&= \sum_{k} \tilde{g}_k\,\hat{\sigma}_A^z\hat{\sigma}_k^z .
\label{span_disp_sa}
\end{align}


The central (ancilla) spin acts as a source of static gradient fields, producing the Ising-type interaction $\hat H^{disp}_{SA}$. 
When driven by a transverse microwave field, it induces the resonant flip-flop interaction $\hat H^{res}_{SA}$ in the rotating frame. Either resonant (transverse) or off-resonant/dispersive (longitudinal) coupling are experimentally accessible at any given time, i.e. \eqref{span_xy} and \eqref{span_disp_sa} are mutually exclusive and linear combinations of these two interaction types are therefore excluded. 
  {\color{black} Importantly, we note that the commutator between $\hat H^{res}_{SA}$ and $\hat H^{disp}_{SA}$ does not vanish in general for $k>1$, i.e., $[\hat H^{res}_{SA}(\{\phi_k\}),\hat H^{disp}_{SA}]\neq 0$ generically. 
}

The desired state $\vert 0\rangle_A\vert00\cdots0\rangle_S$ lies in the eigenspectrum of both Hamiltonians, so that  
\begin{equation}\label{eqn23aaa}
\hat{H}^{res}_{SA}\vert0\rangle_A\vert00\cdots0\rangle_S = 0,
\end{equation}
\begin{equation}\label{eqn24aaa}
\hat{H}^{disp}_{SA}\vert0\rangle_A\vert00\cdots0\rangle_S
= \Big(\sum_k \tilde g_k\Big)\vert0\rangle_A\vert00\cdots0\rangle_S .
\end{equation}

In the Hartmann--Hahn (HH) regime of the resonant Hamiltonian~\eqref{drt_a} 
polarization transfer between the ancilla and the $k$th spin occurs when their precession frequencies are matched in the rotating frame. 
Consequently, under the standard HH assumptions of uniform couplings and a common phase (for instance, as in Dzyaloshinskii--Moriya (DM) type antisymmetric exchange, where the two flip--flop contributions acquire opposite relative signs), the transverse interaction can be taken (up to a collective $z$-rotation) in the experimentally natural form~\eqref{eq: 12a}.
In this scenario, all spins experience identical resonance  provided: 
(a) uniform couplings $g_k=g$;
(b) equal phases $\phi_k=\phi$; and
(c) removal of the global phase $\phi$ by a collective rotation about the $z$ axis,
resulting in the flip--flop Hamiltonian
\begin{equation}
\hat{H}^{res}_{SA}
= g\sum_{k=1}^N
\big(
\hat{\sigma}_A^+\hat{\sigma}_k^-+
\hat{\sigma}_A^-\hat{\sigma}_k^+
\big).
\label{eq: 12a}
\end{equation}

The Hamiltonian~\eqref{eq: 12a} is invariant under spin-exchange operation 
among the spins. 
Let $\hat{\pi}_{ij}$ denote the corresponding spin-exchange operator defined as $\pi_{ij} \vec{\sigma}_i = \vec{\sigma}_j \pi_{ij}$.
Then $ [\hat{H}^{res}_{SA},\hat{\pi}_{ij}]=0$  expressing spin-exchange symmetry of the bath spins. If the longitudinal couplings $\tilde{g}_k$ are also uniform,
then $\hat{H}^{disp}_{SA}$ is likewise spin-exchange symmetric,
$[\hat{H}^{disp}_{SA},\hat{\pi}_{ij}]=0$.

Both these Hamiltonians also commute with the total spin operator of the ensemble, i.e., $[\hat{H}^{res}_{SA},\hat{J}^2]=0$ and $ [\hat{H}^{disp}_{SA},\hat{J}^2]=0$, where
\begin{equation}
\hat{J}^2
=
\hat{J}_x^2+\hat{J}_y^2+\hat{J}_z^2,
\qquad
\hat{J}_\alpha=\frac12\sum_{k=1}^N\hat{\sigma}_k^\alpha ,
\end{equation}
so that the full dynamics remains block-diagonal in the Dicke basis labeled by total spin $k$.
Thus, uniform transverse and longitudinal couplings preserve both angular-momentum (Dicke) symmetry and the permutation symmetry of the spin ensemble.

To break the spin-exchange and angular-momentum (Dicke) symmetries while maintaining HH resonant transfer, the longitudinal couplings $\tilde g_k$ must be nonuniform. For this choice, we have  $[\hat{H}^{disp}_{SA},\hat{J}^2] = i \sigma^z_A \sum_{k<l}(\tilde{g}_k - \tilde{g}_l)(\sigma^y_k \sigma^x_l - \sigma^x_k \sigma^y_l) \ne 0$ and $[\hat{H}^{disp}_{SA},\hat{\pi}_{ij}] = i(\tilde{g}_i - \tilde{g}_j) (\sigma^y_i \sigma^x_j - \sigma^x_i \sigma^y_j) \ne 0$. This choice confirms that the full evolution  satisfies to the symmetry breaking condition~\eqref{condition_2}.



Since $\vert00\cdots0\rangle_S$ is an eigenstate of $\hat H_S$ with eigenvalue $E_S$, we have from \eqref{eqn23aaa}, \eqref{eqn24aaa}:
\begin{align}
(\hat{H}^{res}_{SA}+\hat H_S)\vert0\rangle_A\vert00\cdots0\rangle_S
&= E_S\vert0\rangle_A\vert00\cdots0\rangle_S,
\\
(\hat{H}^{disp}_{SA}+\hat H_S)\vert0\rangle_A\vert00\cdots0\rangle_S
&= \Big(E_S+\sum_k\tilde g_k\Big)
\vert0\rangle_A\vert00\cdots0\rangle_S .
\label{eq: b2}
\end{align}

Hence, under excitation-preservation, HH condition, and experimental-mutual exclusion of \eqref{span_xy} and \eqref{span_disp_sa} assumptions, the Hamiltonians~\eqref{drt_a} with equal $g_k = g$ and~\eqref{eq: H_SA_b} with unequal $\tilde{g}_k$ form the unique experimentally realizable non-commuting pair satisfying condition~\eqref{eq: b1}.}

{\color{black} Although bilinear system--ancilla couplings that do not conserve excitation number
(e.g., \ terms proportional to $\hat{\sigma}_A^-\hat{\sigma}_k^-$ or $\hat{\sigma}_A^+\hat{\sigma}_k^+$)
may destroy the product state $\ket{0}_A\ket{00\cdots0}_S$, such terms cannot support purification. The operators $\hat{\sigma}_A^-\hat{\sigma}_k^-$ remove population only from already doubly excited states and produce no directed population flow toward the target state. Consequently, bilinear couplings that do not preserve excitation number are counterproductive for our purification protocols.}\\






Step 3: Ancilla cooling under dynamical modulation: The dissipative resetting of $A$ dumps part of the entropy acquired from $S$ into an ultracold (photonic or phononic) bath, thus breaking the $SU(2)$ symmetry of $S$  during this step, which completes the cycle (Fig.~\ref{fig:circuit}). 
At this step, the ancilla qubit $A$ is connected to the ultracold bath where it can dump the entropy acquired from $S$.
The state of the ancilla following the dissipative step in the n-th cycle has the form

 \begin{eqnarray}
 \rho^{(n)}_{A} = \begin{pmatrix} 
 1-z^{(n)} & x^{(n)} \\
x^{*(n)} & z^{(n)}
\end{pmatrix},
\label{eq: rdm_probe}   
\end{eqnarray}   

The evolution of $z^{(n)}$ and $x^{(n)}$ is determined by the (generally non-Markovian) relaxation and the decoherence rates $\gamma(t)$ and $\Gamma(t)$, respectively~\cite{xia_2024}, and by a dynamically induced phase $\int dt' \Omega(t')$ due to a driving (control) field with Rabi frequency $\Omega(t)$. The resulting cooling/purification speed of $A$ can be maximized upon choosing the modulation of $\hat{H}_A(t)$ to be such that the rates $\gamma(t)$, $\Gamma(t)$ conform to the anti-Zeno regime of enhanced interaction with the bath~\cite{kurizki_kofman,kofman_2000}.

\begin{figure}
\begin{center}
\includegraphics[width=0.43\textwidth]{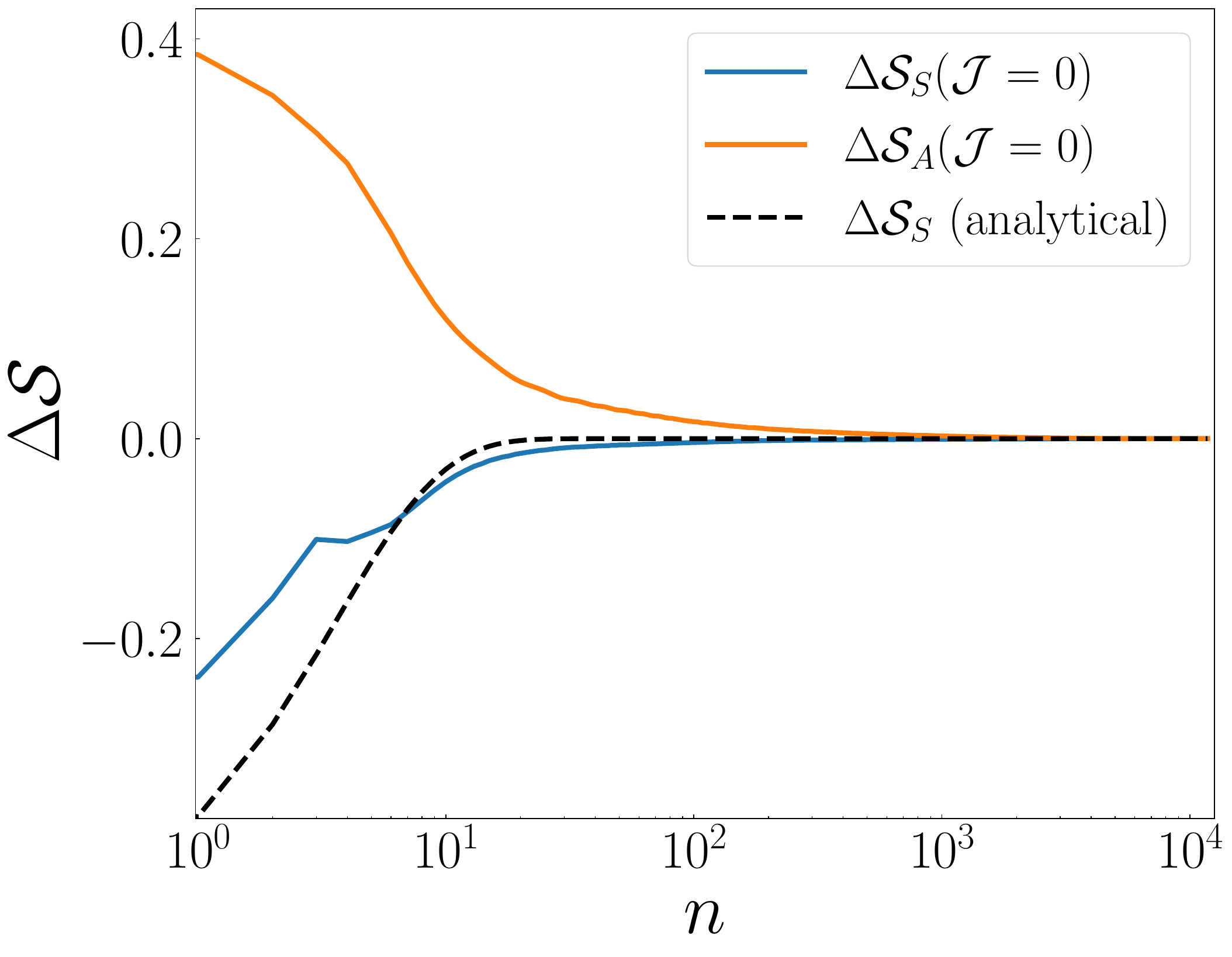}
  \\
\end{center}
 \caption{{\textbf{Purification speed and the third law:} Change in purity of $S$ and $A$ states with the number of cycles $n$ for $N=6$ isolated spins ($\mathcal{J} = 0$), which is the same as for the isotropic model ($\Delta = 1$) chain under ADRT, shows that both purities saturate for $n \gtrsim 10^2$, cycles, so that the ancilla purity can probe the arrival of the system at the FGS.  The dashed black line denotes the estimated analytical curve with a proportionality constant of $-0.5$, signifying its power-law approach towards the FGS.}}
      \label{fig:DRT111}
  \end{figure}

\subsection{Purification speed and the third law}


To demonstrate the convergence of the dynamics to the FGS, we estimate the asymptotic behavior of the purification/cooling rate for the isolated-spin (star) model or the isotropic Heisenberg chain ($\Delta = 1$).  The ADRT dephasing randomizes the phases corresponding to the same $m$ values. The probabilities of these states are then equalized by the dephasing for sufficiently large-$N$ multispin limit, provided the couplings $g_k$ are different and incommensurate.

Initially, the populations of any 
$j$-block $p^{(n+1)}_{j, m} \sim \cos^{2}(J^+_{j,m} \tau) p^{(n)}_{j, m}$ (see Eq.~\eqref{eq: recurrence}) decrease with the cycle index $n$, except for that of the desired FGS. By contrast, in the asymptotic limit of large $n$, let $\varepsilon(n)$ be the error in the fidelity of the desired FGS after $n$ cycles. Here we assume that the off-diagonal populations are negligible compared to their diagonal counterparts, $N$ is sufficiently large such that the number of blocks with $m=N/2-1$ is $d_{j=N/2-1}>1$, and the populations of the blocks with $m< N/2-1$ are negligible compared to the blocks with $m =  N/2-1$. The asymptotic rate of change in the population in the FGS is then 
\begin{eqnarray} 
p^{(n+1)}_{N/2, N/2} -  p^{(n)}_{N/2, N/2}
 \approx  \varepsilon(n) \left(  \frac{  \cos^2(J^+_{N/2, N/2-1} \tau ) +  d_{j=N/2-1}}{ d_{j=N/2-1} + 1 }     \right)^n. \nonumber\\
\end{eqnarray}
 The rate of change in the entropy of the system  is thus proportional to
\begin{eqnarray}
\Delta \mathcal{S}_S  =  \mathcal{S}^{(n+1)}_{S}- \mathcal{S}^{(n)}_{S} 
  \propto \mathcal{O} \Bigg(\left(  \frac{  \cos^2(J^+_{N/2, N/2-1} \tau) +  d_{j=N/2-1}}{ d_{j=N/2-1} + 1 }     \right)^n \Bigg).\nonumber\\
\end{eqnarray}    



The expression in the brackets is constant for given $N$ and $\tau$. Hence, the rate of entropy change of the system decreases with the number of cycles $n$ as a power-law and so does the purity change (Fig.~\ref{fig:DRT111}). This implies that an infinite number of cycles, $n \rightarrow \infty$ is required to reach the FGS with ideal ($100\%$) fidelity, consistently with the third law of thermodynamics.

Thus, the third law of thermodynamics~\cite{nernst_2}, is here confirmed by the exact dynamics of a multi-spin system that obeys either the star model or the isotropic Heisenberg model.  This result stands in contrast to previously considered approximated solutions for a magnon chain 
 coupled to a qubit that is cooled by a bosonic bath~\cite{kolar_2012}. However, the general question of the asymptotic cooling speed in multi-spin networks remains open, as their dynamics is commonly unsolvable.

\subsection{Experimental Protocols}
The general cooling (purification) protocol described here is experimentally realizable for interacting electron-spin networks composed of adjacent quantum dots or NV centers in diamond that interact with nuclear spins. {\color{black} Polarization transfer from optically pumped electron spins to nuclear spins, by SWAP gates (double-resonance experiments) or level-anticrossing schemes, is the prevalent technique for nuclear spin initialization~\cite{emsley_book,bernstein_1957,vandersypen_2005,dasari2022anti}. However, both of these methods respect the symmetries of the problem. Hence, for resetting an entire nuclear spin-network to the state $\vert 000 ... \rangle$, the presence of symmetries in the electron-nuclear spin coupling hinders complete polarization so that the state $\vert 000 ... \rangle$ cannot be approached. Thus, one needs an additional process to break the symmetry-induced bottleneck for complete polarization, which in our ADRT protocol is provided by the dispersive (dephasing) step. 
Such dephasing can be achieved by adding a period of free evolution between in two consecutive (SWAP) operations. Namely, our procedure involves the repeated applications of a resonant field (RT) for polarization transfer (Step 1) and subsequent free evolution of the interacting spins (Step 2) 
that causes symmetry breaking due to their different coupling constants. This procedure is currently feasible via refocusing techniques that can achieve selective, tunable couplings between spins~\cite{dasari2022anti}. }

A particularly simple example is the star model realizable for NV-center electron spin that acts as a central-spin ancilla to cool the surrounding $^{13}\text{C}$ nuclear spins. 
{\color{black} The central spin acts both as a source of gradient fields (conforming to Eq. \eqref{eq: H_SA_b}) through its dipolar couplings to the surrounding nuclei and as a hyperpolarizing agent (conforming to Eq. \eqref{drt_a}) when driven by a microwave field perpendicular to the gradient axis. Thus, switching on the transverse microwave field implements the XY-type interaction of Eq. \eqref{drt_a}, which enables polarization transfer in the rotating frame, whereas switching it off restores the gradient-like (Ising-type) interaction of Eq. \eqref{eq: H_SA_b}. Crucially, these two Hamiltonians cannot coexist, since one is converted into the other via the application of the external driving field.

An analogous scenario can be realized in Rydberg-atom networks: by controlling the detuning of the exciting lasers, interactions between designated Rydberg atoms can be selectively activated or deactivated \cite{steinert_2023}.
} 

An excellent candidate for spin-network engineering consists of molecules, each having two electron spins separated by a few nm, known as molecular rulers~\cite{azarkh2019gd}. Due to the high stiffness of these molecules, their electron spins are well localized and resilient to phonon-induced decoherence, even at appreciable temperatures~\cite{reginsson2012trityl}. Their controllable concentration in a solution to be deposited on a diamond surface can create interacting spin networks on demand, as follows:
A magnetic tip that generates fields with nm-scale gradients~\cite{schein2025pulsed} can be used to control the spin-network order, connectivity, and geometry by switching on and off their intramolecular and intermolecular coupling. These network spins can be cooled/purified from their initial thermal state via their dipolar coupling to the closest electron spin of a shallow NV center close to the diamond surface ($<$ 10 nm) that acts as an ancilla spin. The higher gyromagnetic ratio of the electron spins compared to nuclear spins would allow to drive them selectively at high frequencies (via ODMR) and detect their response (Fig.~\ref{fig:exp}a).

Remarkably, the mere visual inspection of a molecule allows us to conclude whether it is polarizable due to lack of symmetry as in glucose, or unpolarizable due to its complete symmetry, as in benzene (Fig.~\ref{fig:exp}b).

\begin{figure*}
\begin{center}
\includegraphics[width=0.77\textwidth]{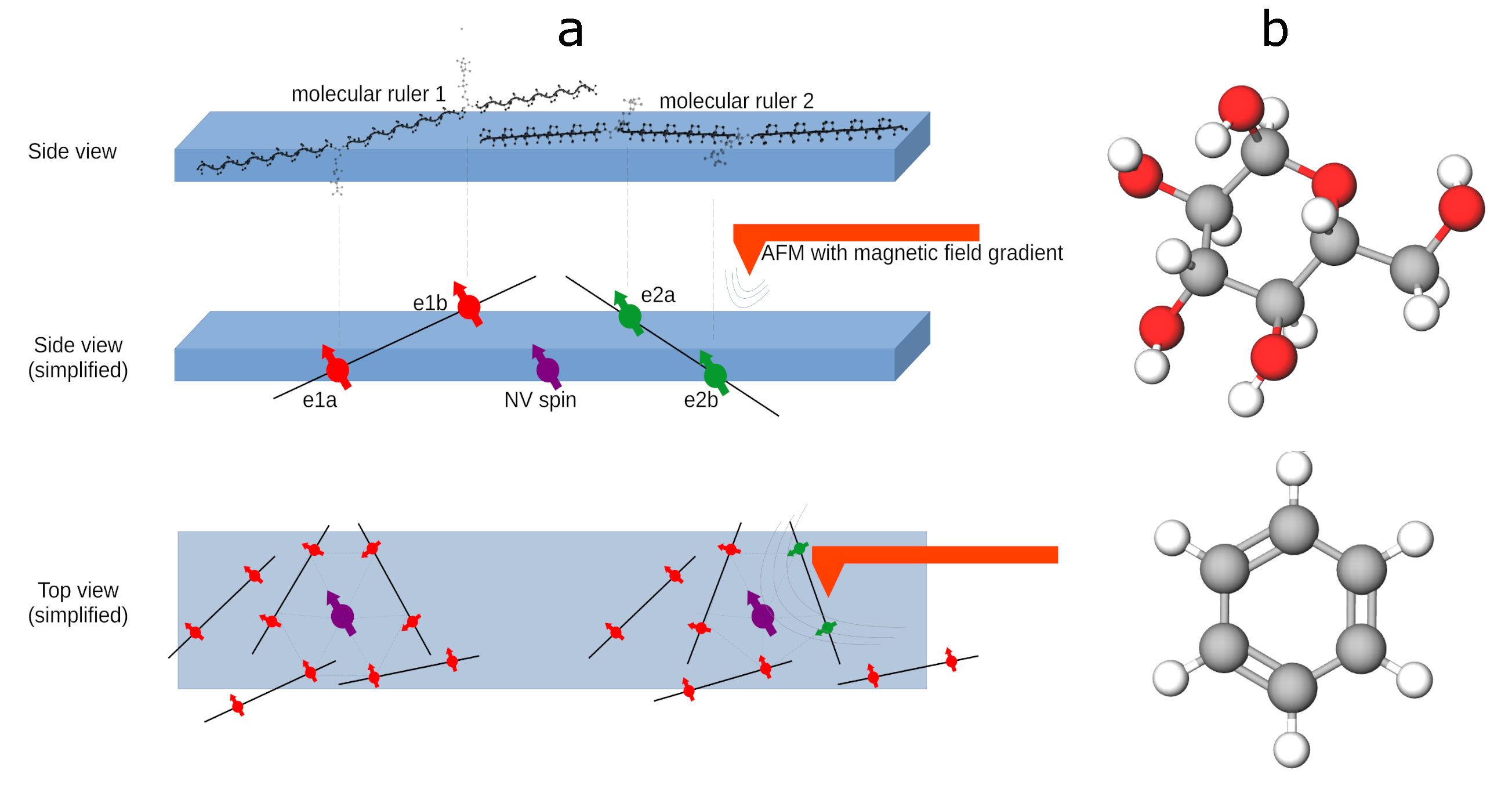}
\end{center}
 \caption{{ \textbf{Experimental demonstrations:} (a) A schematic of an envisaged experimental implementation of an addressable spin network using an NV center in diamond coupled to molecular rulers. Top panel: Shows the surface of the diamond with a couple of molecular rulers~\cite{reginsson2012trityl}, each with a pair of spin labels (electron spins). Middle panel: Portrays an abstraction of the top panel and introduces a magnetic field gradient from an AFM tip~\cite{schein2025pulsed}, which can be turned on and off with a sub-microsecond speed. The gradient allows for a selective addressing of electron spins from the host of molecular rulers on the surface of the diamond by tuning the Larmor precession frequency of those electrons (here e1a and e1b;   e2a and e2b) and bringing them into resonance with the pulse sequence's resonance condition. Bottom panel: A simplified top-view of the diamond surface, where a proper choice of the AFM's tip position and current (gradient) enables specific network topologies, e.g., a hexagonal (left) or a pentagonal (right) network. (b) Top--- Glucose molecule: an example of a spin network that is fully polarizable, due to lack of symmetry among the spin nodes. Bottom--- Benzene molecule: an example of a spin network that is not polarizable due to its symmetry.}}
      \label{fig:exp}
  \end{figure*}
 
\section{Discussion}


The conceptual and technological importance of cooling down interacting quantum many-body systems, particularly spin chains~\cite{endres2016atom,loss_1998,petta2005coherent,kandel2021adiabatic,qiao2020coherent,majer2007coupling,yamamoto2017coherent,johnson2011quantum,king2022coherent,bose,subra1,christandl,wang,manmana,chiara,kurizki_2015_pnas,meier_prl_2003,benjamin2003quantum,fel1997multiple,jones2011quantum,bonizzoni2024quantum,Pritam2024QST,yang2022variational,kolar_2012, kurizki_kofman,miguel_2018,Steinlechner2013,laflamme_2022,boykin_pnas_2002,nernst_2,reviewPr,njp_2020}, from a thermal to a pure state provides ample motivation for tackling this challenging theoretical problem. 
Such purification is the prerequisite for quantum annealing~\cite{johnson2011quantum,king2022coherent},  quantum memory~\cite{loss_1998,bose,christandl,wang,manmana,chiara}, and scalable quantum computing or simulation~\cite{meier_prl_2003,benjamin2003quantum, georgescu_2014} realizable by e.g., cold atoms~\cite{endres2016atom}, quantum dots~\cite{loss_1998,petta2005coherent,kandel2021adiabatic,qiao2020coherent}, superconducting qubits~\cite{majer2007coupling} or their optical analogs~\cite{yamamoto2017coherent}.  The same requirements arise for molecules whose structure and functionality are unraveled by nuclear magnetic resonance (NMR)~\cite{fel1997multiple,jones2011quantum} as well as for spin chains used in quantum sensing~\cite{schlipf_2017,bonizzoni2024quantum,Pritam2024QST}, metrology~\cite{yang2022variational}, and thermometry~\cite{kolar_2012, kurizki_kofman,Chattopadhyay_2025}.

{\color{black} We have here addressed a basic difficulty in designing spin-network resetting/purification, namely, that not only are the spectra and the eigenstates of interacting spins are in general insolvable, but even the numerical analysis of their open-system (quantum) time evolution is computationally unfeasible, say for chains with more than few spins.  
To cope with this difficulty, our analysis has established fundamental relations between the graph structure of quantum spin networks and their polarizability. 

Specifically, we have shown that 1) only graphs that allow for a \textit{unique steady state} are fully polarizable irrespective of their initial states, 2) the degeneracy of automorphism orbits (AO) in the graph sets bounds on network purification and thereby reduces the insurmountable exponential time complexity of diagonalizing the $2^N$ dimensional Hamiltonian of N-node networks to a quasi-polynomial number of steps $\mathcal{O}(N^{p \log N})$ for some polynomial $p$~\cite{grohe_2020}. \textit{Visual inspection of a graph suffices to reveal its AO degeneracy and thus its polarizability} for networks with $N \le 4$. \textit{Rare exceptions} of measure zero for $N>>1$ are graphs with \textit{global spectral symmetry} (SPS)~\cite{mieghem_2010}, that possess dark states which cannot be purified. SPS is generally unrelated to AO symmetry that involves $(i,j)$ exchange operations in the network.

Beyond algorithmic efficiency, the AO formalism provides a conceptual gain that the rank-based approach cannot offer: it establishes a \textit{direct connection between network topology and its maximal polarization}, 
and thus provides a physical interpretation of why certain geometries resist full polarization. 
}

To exceed the symmetry-imposed polarizability bound, we have formulated the universal guiding principle of purifying an interacting $N$-spin system by overcoming symmetry constraints even if its spectrum and dynamics are untractable. This principle amounts to \textit{ breaking all symmetries of the system}, a feature that has \textit{no classical analog}. To this end, we have formulated the alternate dispersive-resonant transfer (ADRT) protocol that destroys such symmetries and thereby \textit{restores the eigenstate thermalization hypothesis} (ETH)~\cite{deutsch_eth_2018,rigol2008thermalization,cramer2008exact} by allowing to fully purify spins in the network to the desired pure state, efficiently and rapidly. It consists of \textit{alternating resonant and off-resonant (dispersive) coupling} of the ancilla $A$ to the system (network) $S$, which induce, in turn, resonant flip-flop and off-resonant dephasing interactions in $S$. As long as the $S-A$ couplings $g_k$ form a \textit{non-uniform} set,  \textit{the spin-network symmetry can be completely broken} by ADRT, and $S$ polarization can be attained. 

 {\color{black} At an abstract level, ADRT shares the essential structure of collision models (CMs) that have been introduced for thermalizing machines\cite{scarani2002,ciccarello2017,strasberg2019, lorenzo2017, cattaneo2021,PhysRevE.99.042145}, wherein the system undergoes successive interactions with an ancilla and each such interaction is reset by an activated dissipation channel.  While ADRT fits this repeated-interaction paradigm, it differs from standard CMs both conceptually  and practically: 
(a) In most CMs, ancillae locally interact with a single subsystem or a small subset thereof~\cite{scarani2002, ciccarello2017}. In contrast, ADRT employs a \textit{single ancilla} that couples \textit{collectively} to many sites of the system in each cycle. This collective coupling is essential for redistributing entropy among symmetry-related degrees of freedom that are not considered in standard CM.
(b) CMs usually assume a fixed interaction Hamiltonian for each collision or allow its conditioned changes~\cite{ciccarello2017}, but do not exploit an alternating non-commuting sequence of interactions designed to break the system symmetries, which is the goal of ADRT. 
(c) A faithful mapping of the dynamics to a collisional model is far from trivial for an interacting qubit network, since many ancillary qubits and collisions with multiple nodes would then be required.}

Importantly, our approach bypasses the difficulty of measuring the spin-chain purity by inferring it from the measured ancilla purity. The purification speed under the ADRT protocol for the isotropic Heisenberg chain or the star network has been shown to diminish exponentially with the number of cycles, thus confirming the third law. One must still clarify the rapport with conflicting results whereby a magnon chain can be cooled by a driven qubit (ancilla) that is coupled to a bosonic bath at a speed that \textit{remains finite at arbitrarily low temperature}~\cite{kolar_2012}.

The collective $N$-spin purification strategy based on Eqs.~(\ref{drt_a}), (\ref{eq: H_SA_b}) is generally superior in terms of speed and efficacy to the algorithmic cooling approach ~\cite{tal_mor_pra_2016,boykin_pnas_2002, boykin_pnas_2002,oftelie_prxq_2024} which relies on \textit{sequential swaps} between adjacent spins (an approach that has not been attempted for $N$-spin chains). Indeed, each swap in algorithmic cooling has a duration of $\sim \pi/K$, where $K$ is the induced (switchable) coupling strength between adjacent spins. In order to be controlled, the swap rate $K$ should greatly override the natural spin-spin coupling strength $\mathcal{J}$ (cf. Eq.~\eqref{network_hamiltonian}), which may be demanding. Yet even if this condition holds, the shortest duration of sequential swaps, $\sim \pi N/K$ is slower than the collective swap duration $\sim \pi/g$ unless $K \gtrsim Ng$. Furthermore, sequential swaps yield an untractable, non-monotonic entropy distribution in a closed interacting spin chain, which may therefore require a much longer cooling process than the collective ADRT.

To sum up, our general strategy for interacting-network purification rests on three conceptual innovations: i) classification of purification (polarization) constraints due to symmetry by graph theory; ii) an estimator of graph polarizability based on its topology; and iii) a universal dynamical protocol (ADRT) for complete purification by removing the symmetry constraints.
It offers both a comprehensive conceptual framework and a practical pathway toward efficient qubit-network initialization and hyperpolarization beyond the limitations of passive relaxation, free of symmetry obstructions.

At the practical level, ADRT relies solely on repeated, tunable interactions between a single qubit and a network and is therefore adaptable to diverse platforms.  Thus, it can be used to purify and reset a) superconducting qubit networks that are coupled by transmission-line resonators with coupling strength that is tunable via magnetic flux~\cite{liao2010single}; (b) micromaser schemes generalized to multiple coupled cavities that may form purifiable photonic networks; 
c) ordered arrays of two-level systems that are coupled by electromagnetic waveguides (e.g., cold atoms trapped in optical lattices, impurities in crystalline solids or superconducting qubit arrays). All of the above are describable by our general theory of symmetric spin networks that have to be reset after entangling/ QIP operations by repeatedly dumping their entropy in a collective fashion via an optically active ancilla.

{\color{black} 

}

As an outlook, our approach can pave the way to the design of quantum networks capable of efficient initialization (resetting) control for quantum information processing~\cite{kurizki_2015_pnas,georgescu_2014},  quantum-battery designs~\cite{mondal_2022}  or magnetic-resonance spectroscopy~\cite{fel1997multiple,jones2011quantum} and imaging (MRI)~\cite{fel1997multiple,jones2011quantum}. In the context of thermodynamic control~\cite{kurizki_kofman}, the attainable purity and purification speed can determine the efficiency and refrigeration power of a single-spin (ancilla) coupled to a multi-spin bath.

\section{Methods}

\subsection*{1. Map SSS and AO multiplicity of graphs: symmetry constraints}
 For $N$ spins, the $2^N$ dimensional Hilbert space, corresponds to the following number of independent real variables in $\rho_S$ and hence in the map $\mathbcal{M}$ (Eq.~(\ref{linear_map})):
\begin{eqnarray}
\mathscr{D}(\mathbcal{M}) =\sum^N_{m=1} 3^m {N \choose m} = 4^N -1.
\label{eq:a1}
\end{eqnarray}\\





\noindent Theorem 1. Let $\rho^S \equiv \{\rho_1, \rho_2, ..., \rho_N\}$ be a set of steady states generated by map $\mathbcal{M} $:
\begin{equation}\label{A:e1a}
 \mathbcal{M}  \rho_i = \rho_i \mbox{~~for~~} i  = 1,2,...,N  
\end{equation}
Then the system will converge to the particular steady state with the same symmetry as the initial state. The different steady states $\rho_i$ depend on the basins of attraction of the map, but may also be degenerate.\\

\noindent Proof: Let the map $\mathbcal{M} $ be symmetric with respect to some symmetry operations $\Pi$:
\begin{equation}\label{A:e2}
 \Pi \mathbcal{M}  \Pi^\dagger = \mathbcal{M} .   
\end{equation}

Then the state  $(\Pi \rho_i)$  preserves the symmetry of the map
\begin{eqnarray}\label{A:e4}
 \mathbcal{M} (\Pi \rho_i) =  \Pi \rho_i, 
\end{eqnarray}

and belongs to the set $\rho^S$.

(a) If the state $\rho_i$ is invariant under $\Pi$, then $\Pi \rho_i = \rho_i$. which means that there exists only one (unique) steady state. 

(b) If the state $\rho_i$ is not invariant under $\Pi$, then $\Pi \rho_i = \rho_j$ where $j \ne i$.  This would imply that the steady states are related by symmetry operations. 

We assume that the symmetry operation is discrete, i.e., $\Pi^q = \mathbf{1}$ for some natural number $q$. From Eq.~\eqref{A:e4}, it follows that the map can only  have $q$ steady states ($q$-fold degeneracy).\\

Let  the map for one cycle be defined as follows;
\begin{eqnarray}\label{A:e5}
 \mathbcal{F}  \rho^{(n)} = \rho^{(n+1)}, {~~~~} n = 0, 1, 2, .....  
\end{eqnarray}

From the above equation, we have
\begin{eqnarray}\label{A:e6}
 \Pi \mathbcal{F}  \Pi^\dagger \Pi \rho^{(n)} = \Pi \rho^{(n+1)}.    
\end{eqnarray}

Using Eq.~\eqref{A:e2} we can write
\begin{eqnarray}\label{A:e7}
  \mathbcal{F}  (\Pi \rho^{(n)}) = \Pi \rho^{(n+1)}.
\end{eqnarray}

If the initial state is symmetric w.r.t $\Pi$, then
\begin{eqnarray}\label{A:e8}
  \Pi \rho^{(0)} =  \rho^{(0)}. 
\end{eqnarray}

From Eq.~\eqref{A:e7} we have
\begin{eqnarray}\label{A:e9}
 \mathbcal{F}  \rho^{(0)} = \Pi \rho^{(1)}.   
\end{eqnarray}

Yet from the definition Eq.~\eqref{A:e5}, we have $\mathbcal{F}  \rho^{(0)} = \rho^{(1)}$. This means that the state $\rho^{(1)}$ generated by $\mathbcal{F} $ from $\rho^{(0)}$ is also symmetric, i.e., $\Pi \rho^{(1)} = \rho^{(1)}$. 

By induction,  the above statement is valid for all the recurring steps (cycles) $\mathbcal{M} = \mathbcal{F}^{(n \rightarrow \infty)}$ between the initial state and the steady state, hence the steady state is also symmetric. 

This result upholds case (a),  where the steady state is unique if it satisfies $\Pi \rho_i = \rho_i$.

 \subsection*{2. AO degeneracy}

Let us consider the effects of $\hat{\pi}_{ij} =   \frac{1}{2}(\mathbf{1} + \hat{\sigma}^x_i \hat{\sigma}^x_j + \hat{\sigma}^y_i \hat{\sigma}^y_j + \hat{\sigma}^z_i \hat{\sigma}^z_j)$ the Dirac spin-exchange operator that permutes $i$ and $j$. We require that
 \begin{eqnarray}
 [\hat{U}(\tau),\hat{\pi}_{ij}] =  [\hat{H}(t),\hat{\pi}_{ij}] = 0,
 \end{eqnarray} 

The permutations of all the nodes except the node labeled 1 that is connected to the probe (and thus belongs to a separate AO), generate a symmetry group $S_{N-1}$ with the node indices $2,...,N$.  One can construct $(N-1)!$ adjacency matrices $\textbf{A}_1, \textbf{A}_2, ...., \textbf{A}_{(N-1)!}$ from $\textbf{A}$ in Eq.~(\ref{eq:adjacency})
by these permutations. 

Permutations can be represented in the compact cyclic notation that describes the effect of repeatedly applying the permutation $\hat{\pi}$ on the set of node indices $2,...,N$. In this notation,  any permutation among the node indices that belongs to $S_{N-1}$  does not change the graph (represented by its adjacency matrix). 

All symmetry operations of a given graph can be represented by a product of cycles. A cycle of $L$ indices,  referred to as an $L$-cycle, corresponds to an AO of $L$ nodes.  By contrast, if all the indices map to themselves under $\hat{\pi}$, the permutation group has $(N-1)$ one-cycles. In this case all the adjacency matrices are different except for $\textbf{A}_1 = \textbf{A}$. They represent an identity graph that has exactly $N$ AOs with one node each. 


Polarizability: If we could restrict the dynamics to a subspace with $\mathscr{N}(\mathbcal{M}) = 0$, by choosing certain initial states, a fully polarized state would be achieved.  Such a subspace would consist of initial states that respect the AO symmetry, i.e.,
\begin{eqnarray}
  \hat{\pi}_{ij}\rho(0)\hat{\pi}^{\dagger}_{ij} = \rho(0) \forall(i,j)~~[\hat{\pi}_{ij},\hat{H}_N] = 0. 
  \label{eq:commutation}
\end{eqnarray}\\

\noindent Theorem 2. Spins at nodes belonging to the same AO have identical steady states under map $\mathbcal{M}$ provided that the initial state is symmetric, e.g. the maximally mixed state. 

\noindent Proof: Let a pair of nodes $(i,j)$ belong to the same AO. At the steady state, the reduced density matrices of the spins at the nodes $i$ and $j$ are $\rho_{S_{i}}$ and $\rho_{S_{j}}$ respectively. Let us suppose that $\rho_{S_{i}} \ne \rho_{S_{j}}$. Then automorphism permutation $(i,j) \rightarrow (j,i)$ would keep the graph invariant, but the network would have a different steady state where $\rho_{S_{i}} \rightarrow \rho_{S_{j}}, \rho_{S_{j}} \rightarrow \rho_{S_{i}}$. Yet,  since   $\mathbcal{M}$ is a linear map, the steady state must have the same symmetry as the initial symmetric state and thus be unique. Therefore, the supposition is false and $\rho_{S_{i}} = \rho_{S_{j}}$. Thus, reduced density matrices of spins corresponding to nodes in the same AO are identical and hence mutually dependent. This decreases the number of independent variables in $\rho_S$ and consequently reduces the rank $\mathscr{R}(\mathbcal{M})$ of the map $\mathbcal{M}$, thus precluding full polarization.

It can be seen from Eq.~(\ref{eq: map_n}), Theorems~ 1  and 2 that the map $\mathbcal{M}$ retains the symmetry of the AOs as $[e^{-i\hat{H}\tau},\hat{\pi}_{ij}] = 0$ if the pair $(i,j)$  belongs to an AO. The dynamics, if started from a state with AO symmetries, will respect the AO symmetries throughout the evolution. Hence, the dynamics is always restricted to the subspace $\mathscr{R}(\mathbcal{M}) = \mathscr{D}(\mathbcal{M})$, $\mathscr{N}(\mathbcal{M}) = 0$.
  
If there are $K$ AOs in the network, the number of such basis states will be $2^K$. Since  only initial states that preserve the AO symmetry  lead to the fully polarized state, the polarizability (purification) parameter $P$ in the case of the maximally mixed state in the computational basis, is estimated by

\begin{equation}
P = \frac{1}{2^{N-K}},  
\label{eq: M7}
\end{equation}
which is our major result.

Numerical calculations show that asymmetric states make small contributions to polarization and that spectral-symmetry (SPS) states that hinder the polarization are rare. Therefore, Eq.~(\ref{eq: M7}) is the approximated value of Eq.~(\ref{eq: bound}).  This value falls off exponentially with $N$ for a fixed number $K$ of AOs.

Among all possible networks with $N$ spins, the one with the largest steady-state subspace is dubbed a complete graph. In a network with $N$ spins, the maximum and minimum values of $K$ are $N$ and 2, respectively. Hence, the lowest value for $P$ will be achieved for a complete graph, such that $P \approx 1/2^{N-2}$ (Fig.~\ref{fig:2}c).\\

\noindent Theorem 3.  The information contained in a graph $I_G$ and the size of the steady-state subspace with nullity $\mathscr{N}(\mathbcal{M})$ are in one-to-one correspondence.\\

\noindent Proof: The information contained in a graph $I_G$ is a one-to-one map of the distribution of nodes among different  AOs $(n_1,n_2,...,n_N)$. This follows from the concavity property of Shannon entropy. We show in SI C that the nullity $\mathscr{N}(\mathbcal{M})$ is also a one-to-one map of the same distribution.
Since both $I_G$ and the nullity $\mathscr{N}(\mathbcal{M})$ are one-to-one maps of the distribution $(n_1,n_2,...,n_N)$, they are in one-to-one correspondence with each other, i.e., for graphs $G$ and $G'$ we have

\begin{eqnarray}
  \mathscr{N}(\mathbcal{M})_G = \mathscr{N}(\mathbcal{M'})_{G'}   \Leftrightarrow I_G = I_{G'}.
\end{eqnarray}

Therefore, the information in the graph representation of a quantum network uniquely characterizes the SSS properties. Moreover, since $\mathscr{N}(\mathbcal{M})$ is a strictly convex function of $(n_1,n_2,...,n_N)$ (SI B), $\mathscr{N}(\mathbcal{M})$ monotonically decreases with $I_G$. Therefore, the lesser $I_G$, the larger its SSS subspace. This trend is seen from the plot of the relative size of the steady-state subspace, $\mathscr{N}(\mathbcal{M})/(4^N -1)$, versus $\tilde{I}_G = {I_G}/{\log N}$ for all possible networks with $2 \le N \le 7$ spins in Fig.~\ref{entropy_nullity}. 

\begin{figure}
 \begin{center}
\includegraphics[width=0.48\textwidth]{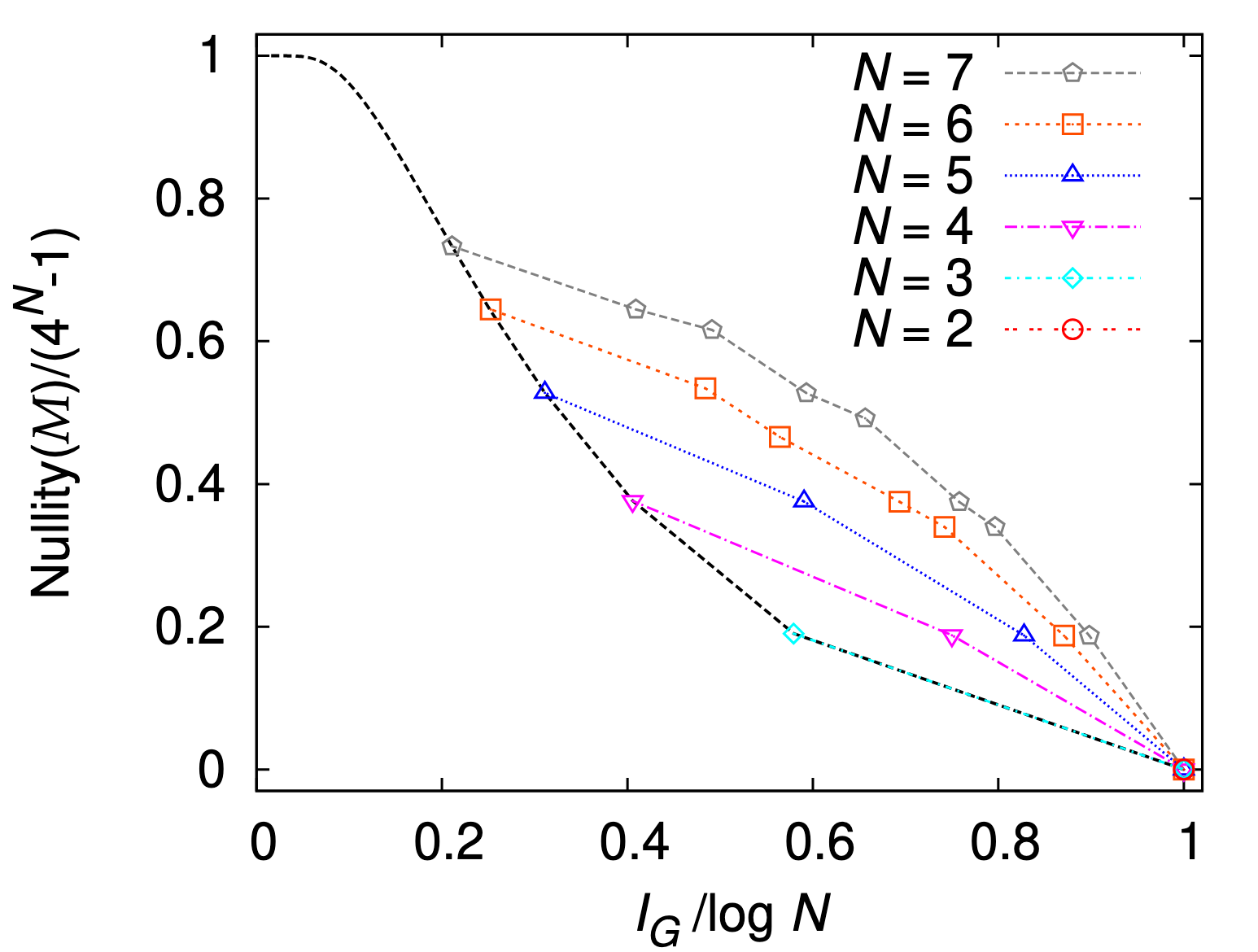}
 \end{center}
 \caption{{ \textbf{Steady state subspace and Information of a graph:} Information contained in a graph $\tilde{I}_G = I_G/(\log N)$ versus the relative size of the steady subspace i.e., $\mathscr{N}(\mathbcal{M})/(4^N-1)$ for all possible $N$-networks with $N=2, 3, 4, 5, 6,$ and $7$. The black dotted line denotes the upper bound of $\mathscr{N}(\mathbcal{M})/(4^N-1)$ given in SI II, Eq.~(S11).}}
 \label{entropy_nullity}
 \end{figure}

\subsection*{3. Spectral symmetry of graphs}

The existence of such a null space depends on the topology of the graph, but it is generally impossible to find the locations of the zeros in these eigenvectors from visual symmetries~\cite{mieghem_2010}.\\

\noindent Theorem 4. If a graph is singular ( that has at least one zero eigenvalue) and the corresponding eigenvectors (kernel of the graph) have null support on any of the nodes (i.e., zeros appear in the eigenvector), the same graph with the probe attached to that node(s) will also contain a null subspace and the eigenvector corresponding to the zero eigenvalues will have null support on the same node.\\

\noindent Proof: Let $\mathbf{x}_0$ be an eigenvector belonging to the eigenvalue zero of the adjacency matrix $\mathbf{A}$ for a graph $G$ with $n$ nodes. If this eigenvector has  a zero in position (site) $(\mathbf{x}_0)_{i} = 0$, then from the eigenvalue equation one has

\begin{eqnarray}
 \sum_{j} \mathbf{A}_{ij} (\mathbf{x}_0)_j = \mathbf{0}, ~~~i = 1,2,...,n
 \label{rank_eq}
\end{eqnarray}

If we add another node $a$  to one of these nodes (say node $1$). The new adjacency matrix $\tilde{\mathbf{A}}$ for a graph $\tilde{G}$ with $(n+1)$ nodes will be

\begin{eqnarray}
&&\tilde{\mathbf{A}}_{ij} =  \mathbf{A}_{ij}, \nonumber\\
&&\tilde{\mathbf{A}}_{a1} = \tilde{\mathbf{A}}_{1a} \equiv g, \nonumber\\
&&\tilde{\mathbf{A}}_{ai} = \tilde{\mathbf{A}}_{ia} = 0,~~~~ i \ne 1. \nonumber\\
\end{eqnarray}

One can readily find from Eq.~(\ref{rank_eq}) that
\begin{eqnarray}
 \text{Det~} \tilde{\mathbf{A}} =   \text{Det~} {\mathbf{A}} = 0.  
\end{eqnarray}
\begin{figure*}[htpb]
 \begin{center}
\includegraphics[width=0.650\textwidth]{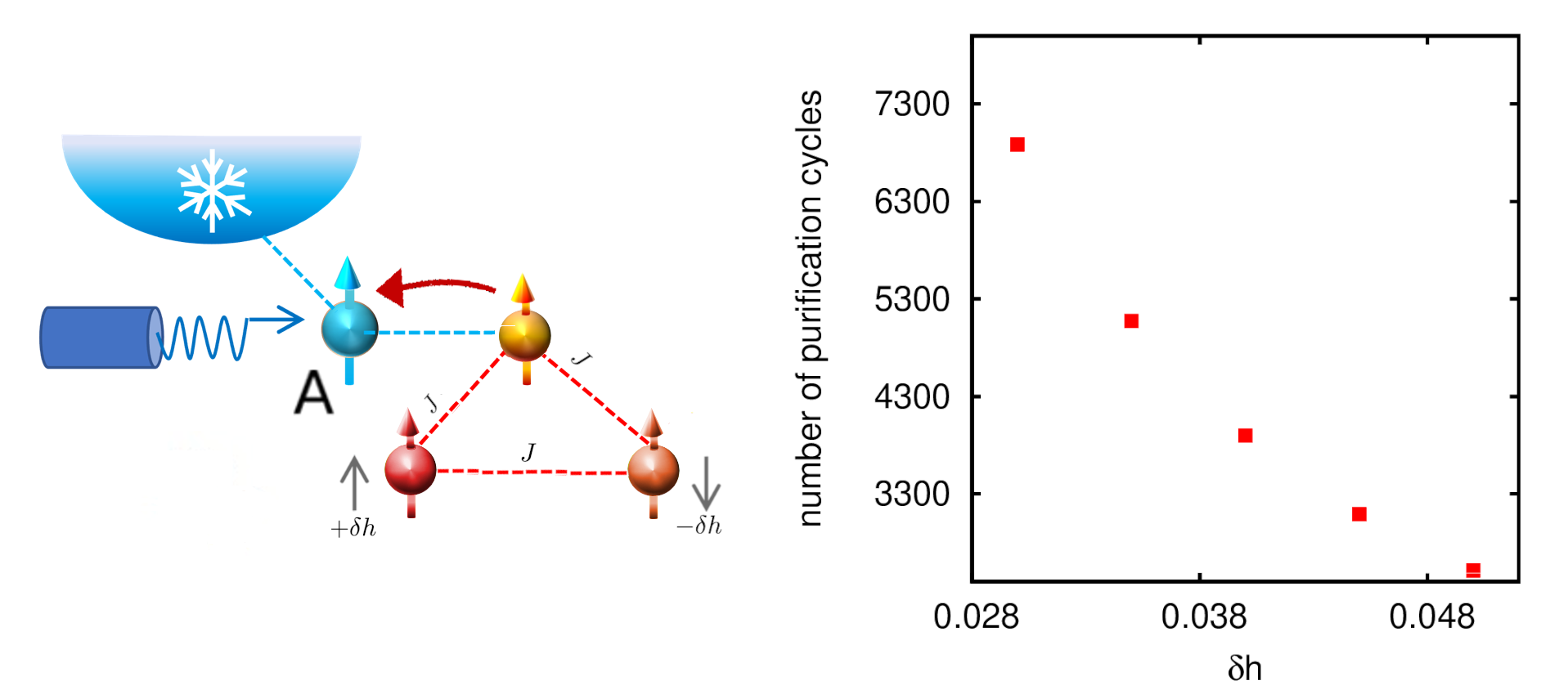}
 \end{center}
 \caption{{\color{black}{\textbf{Symmetry-breaking by on-site field:} Number of purification cycles required to reach $P = 0.9$ in a 3-spin closed chain (isosceles) under slight asymmetry (inset) as a function of  $\delta h$ (transverse-field induced spin-level splitting).  Here, $\delta h$  (diagonal) disorder breaks the AO symmetry and leads to full purification.}}}
\label{fig:9a}
 \end{figure*}

Expanding the row corresponding to the node $a$  of the eigenvalue equation or the graph $\tilde{G}$  associated with the zero eigenvalue 

\begin{eqnarray}
 \sum_{ij}\tilde{\mathbf{A}}_{ij} (\tilde{\mathbf{x}}_0)_j  = \mathbf{0}.
 \label{eq:eigenvalue_eqn_g_tilde}
\end{eqnarray}
 one finds,

 \begin{equation}
(\tilde{\mathbf{x}}_0)_1 = 0. 
 \end{equation}
This proves that the support of the eigenvector with zero eigenvalue of the graph $\tilde{G}$ is null at the node $1$.\\

\noindent Theorem 5. The degenerate subspace of the graph corresponding to a zero eigenvalue contains only the eigenvector corresponding to the zero eigenvalue(s) of the graph without the probe.\\

\noindent Proof: The eigenvalue equation (\ref{eq:eigenvalue_eqn_g_tilde}) reads, upon expanding the second row in terms of the adjacency matrix $\mathbf{A}$,

\begin{eqnarray}
 g (\tilde{\mathbf{x}}_0)^k_a +  \sum_{j}{\mathbf{A}}_{2j} (\tilde{\mathbf{x}}_0)^k_j  = {0},~~~ \text{for all}~~~ k.   
\end{eqnarray}
A linear combination with real coefficients $b_k$ of the following form can be written  

\begin{eqnarray}
&& \sum_k  b_k (\tilde{\mathbf{x}}_0)^k_j = 0, \nonumber\\
&& \implies \sum_j \mathbf{A}_{2j}(\sum_k  b_k(\tilde{\mathbf{x}}_0)^k_j) = 0.
\end{eqnarray}

Therefore, the vector $\sum_k  b_k(\tilde{\mathbf{x}}_c)^k_j$, is  an eigenvector of $\mathbf{A}$ corresponding to the zero eigenvalue  that is spanned by the eigenvector(s) $\tilde{\mathbf{x}}_0$ of the null space of $\tilde{\mathbf{A}}$.\\

\subsection*{4. AO symmetry of graphs with self-loops}

{\color{black} In the adjacency-graph representation, a local transverse-field along z, which constitutes an on-site potential, may correspond to self-loops with weights given by the field strength~\cite{Mulken2011_CTWQ_review,Babai2016}. 
The symmetry classification by automorphism orbits (AOs) can then be generalized by the theory of vertex-automorphism groups to self-loop augmented matrix~\cite{Babai2016,Grohe2018}
\begin{equation}
\tilde{A}_{kl} =
\begin{cases}
g_{kl}, & k\neq l,\\[3pt]
h_k, & k=l, 
\end{cases}
\end{equation}
The automorphism group $\mathrm{Aut}(\tilde{A})$ is defined as the group of all permutations $\pi$ satisfying 
\begin{equation}
    \tilde{A}_{\pi(k)\pi(l)}=\tilde{A}_{kl} \quad \text{for all} \quad k,l. 
\end{equation}
If, for all nodes, the degeneracy is lifted by local magnetic fields, then $\mathrm{Aut}(\tilde{A})$ yields
\begin{equation}
 K=N,    
\end{equation}
and the RT protocol guarantees complete purification. Thus, local inhomogeneous fields lift residual AO degeneracies and enhance the purification bound.
A plot to demonstrate the effect of the on-site magnetic field (Fig.~\ref{fig:9a}). 
}

\section*{Acknowledgments}
G.K. and  D.B.R.D  would like to acknowledge support from DFG (Project no. FOR2724). D.B.R.D would like to acknowledge support from the German Ministry of Education and Research for the project QECHQS (BMBF, Grant agreement no. 16KIS1590K). A.F. is the incumbent of the Elaine Blond Career Development Chair. \"{O}. E. M. acknowledges the support by the Scientific and Technological
Research Council of Türkiye (TUBITAK) under grant number 123F150. N.E.P. acknowledges the support of the HORIZON-RIA Project EuRyQa (grant No. 101070144).  P.C. acknowledges the support from the International Postdoctoral Fellowship from the Ben May Center for Theory and Computation. We acknowledge discussions with A.~\"{O}zdemir and S. Chakraborty.



\vspace{3ex}

\textbf{Author contributions:} S.S., P.C., A.C., N.P.,  D.D.B.R. did  the calculations. S.S., A.C., A.F., D.D.B.R. prepared the figures. G.K., O.E.M, and D.D.B.R. conceptualized the problem. All authors wrote and reviewed the manuscript.

\vspace{3ex}
\textbf{Competing interests:} The authors declare no competing interests.

\vspace{4ex}
\bibliography{manuscript}

\begin{thebibliography}{22}%
\makeatletter
\providecommand \@ifxundefined [1]{%
 \@ifx{#1\undefined}
}%
\providecommand \@ifnum [1]{%
 \ifnum #1\expandafter \@firstoftwo
 \else \expandafter \@secondoftwo
 \fi
}%
\providecommand \@ifx [1]{%
 \ifx #1\expandafter \@firstoftwo
 \else \expandafter \@secondoftwo
 \fi
}%
\providecommand \natexlab [1]{#1}%
\providecommand \enquote  [1]{``#1''}%
\providecommand \bibnamefont  [1]{#1}%
\providecommand \bibfnamefont [1]{#1}%
\providecommand \citenamefont [1]{#1}%
\providecommand \href@noop [0]{\@secondoftwo}%
\providecommand \href [0]{\begingroup \@sanitize@url \@href}%
\providecommand \@href[1]{\@@startlink{#1}\@@href}%
\providecommand \@@href[1]{\endgroup#1\@@endlink}%
\providecommand \@sanitize@url [0]{\catcode `\\12\catcode `\$12\catcode `\&12\catcode `\#12\catcode `\^12\catcode `\_12\catcode `\%12\relax}%
\providecommand \@@startlink[1]{}%
\providecommand \@@endlink[0]{}%
\providecommand \url  [0]{\begingroup\@sanitize@url \@url }%
\providecommand \@url [1]{\endgroup\@href {#1}{\urlprefix }}%
\providecommand \urlprefix  [0]{URL }%
\providecommand \Eprint [0]{\href }%
\providecommand \doibase [0]{https://doi.org/}%
\providecommand \selectlanguage [0]{\@gobble}%
\providecommand \bibinfo  [0]{\@secondoftwo}%
\providecommand \bibfield  [0]{\@secondoftwo}%
\providecommand \translation [1]{[#1]}%
\providecommand \BibitemOpen [0]{}%
\providecommand \bibitemStop [0]{}%
\providecommand \bibitemNoStop [0]{.\EOS\space}%
\providecommand \EOS [0]{\spacefactor3000\relax}%
\providecommand \BibitemShut  [1]{\csname bibitem#1\endcsname}%
\let\auto@bib@innerbib\@empty
\bibitem [{\citenamefont {Dehmer}\ and\ \citenamefont {(ed.)}(2009)}]{dehmer}%
  \BibitemOpen
  \bibfield  {author} {\bibinfo {author} {\bibfnamefont {M.}~\bibnamefont {Dehmer}}\ and\ \bibinfo {author} {\bibfnamefont {F.~E.-S.}\ \bibnamefont {(ed.)}},\ }\href@noop {} {\emph {\bibinfo {title} {Analysis of Complex Networks From Biology to Linguistics}}}\ (\bibinfo  {publisher} {WILEY-VCH Verlag},\ \bibinfo {year} {2009})\BibitemShut {NoStop}%
\bibitem [{\citenamefont {Pr{\v z}ulj}(2007)}]{przulj_2007}%
  \BibitemOpen
  \bibfield  {author} {\bibinfo {author} {\bibfnamefont {N.}~\bibnamefont {Pr{\v z}ulj}},\ }\bibfield  {title} {\bibinfo {title} {{Biological network comparison using graphlet degree distribution}},\ }\href {https://doi.org/10.1093/bioinformatics/btl301} {\bibfield  {journal} {\bibinfo  {journal} {Bioinformatics}\ }\textbf {\bibinfo {volume} {23}},\ \bibinfo {pages} {e177} (\bibinfo {year} {2007})}\BibitemShut {NoStop}%
\bibitem [{\citenamefont {Finotelli~Paolo}\ and\ \citenamefont {Paolo}(2021)}]{paolo_2021}%
  \BibitemOpen
  \bibfield  {author} {\bibinfo {author} {\bibfnamefont {M.~E.}\ \bibnamefont {Finotelli~Paolo}, \bibfnamefont {Piccardi~Carlo}}\ and\ \bibinfo {author} {\bibfnamefont {D.}~\bibnamefont {Paolo}},\ }\bibfield  {title} {\bibinfo {title} {{A Graphlet-Based Topological Characterization of the Resting-State Network in Healthy People}},\ }\bibfield  {journal} {\bibinfo  {journal} {Frontiers in Neuroscience}\ }\textbf {\bibinfo {volume} {15}},\ \href {https://doi.org/10.3389/fnins.2021.665544} {10.3389/fnins.2021.665544} (\bibinfo {year} {2021})\BibitemShut {NoStop}%
\bibitem [{\citenamefont {Mowshowitz}\ \emph {et~al.}(2019)\citenamefont {Mowshowitz}, \citenamefont {Dehmer},\ and\ \citenamefont {Emmert-Streib}}]{mowshowitz}%
  \BibitemOpen
  \bibfield  {author} {\bibinfo {author} {\bibfnamefont {A.}~\bibnamefont {Mowshowitz}}, \bibinfo {author} {\bibfnamefont {M.}~\bibnamefont {Dehmer}},\ and\ \bibinfo {author} {\bibfnamefont {F.}~\bibnamefont {Emmert-Streib}},\ }\bibfield  {title} {\bibinfo {title} {A note on graphs with prescribed orbit structure},\ }\bibfield  {journal} {\bibinfo  {journal} {Entropy}\ }\textbf {\bibinfo {volume} {21}},\ \href {https://doi.org/10.3390/e21111118} {10.3390/e21111118} (\bibinfo {year} {2019})\BibitemShut {NoStop}%
\bibitem [{\citenamefont {Meyer}(2000)}]{meyer01}%
  \BibitemOpen
  \bibfield  {author} {\bibinfo {author} {\bibfnamefont {C.~D.}\ \bibnamefont {Meyer}},\ }\href@noop {} {\emph {\bibinfo {title} {Matrix Analysis and Applied Linear Algebra}}}\ (\bibinfo  {publisher} {SIAM},\ \bibinfo {year} {2000})\BibitemShut {NoStop}%
\bibitem [{\citenamefont {Askey}(1975)}]{askey_book}%
  \BibitemOpen
  \bibfield  {author} {\bibinfo {author} {\bibfnamefont {R.}~\bibnamefont {Askey}},\ }\href@noop {} {\emph {\bibinfo {title} {Orthogonal polynomials and special functions}}}\ (\bibinfo  {publisher} {SIAM},\ \bibinfo {year} {1975})\BibitemShut {NoStop}%
\bibitem [{\citenamefont {Karamata}(1932)}]{karamata_1932}%
  \BibitemOpen
  \bibfield  {author} {\bibinfo {author} {\bibfnamefont {J.}~\bibnamefont {Karamata}},\ }\bibfield  {title} {\bibinfo {title} {Sur une in{\'e}galit{\'e} r{\'e}lative aux fonctions convexes},\ }\href@noop {} {\bibfield  {journal} {\bibinfo  {journal} {Publ. Math. Univ. Belgrade}\ }\textbf {\bibinfo {volume} {1}},\ \bibinfo {pages} {145} (\bibinfo {year} {1932})}\BibitemShut {NoStop}%
\bibitem [{\citenamefont {Bethe}(1931)}]{bethe}%
  \BibitemOpen
  \bibfield  {author} {\bibinfo {author} {\bibfnamefont {H.}~\bibnamefont {Bethe}},\ }\bibfield  {title} {\bibinfo {title} {Zur theorie der metalle},\ }\href {https://doi.org/10.1007/BF01341708} {\bibfield  {journal} {\bibinfo  {journal} {Zeitschrift f{\"u}r Physik}\ }\textbf {\bibinfo {volume} {71}},\ \bibinfo {pages} {205} (\bibinfo {year} {1931})}\BibitemShut {NoStop}%
\bibitem [{\citenamefont {Izyumov}\ and\ \citenamefont {Skryabin}(1988)}]{izyumov}%
  \BibitemOpen
  \bibfield  {author} {\bibinfo {author} {\bibfnamefont {Y.~A.}\ \bibnamefont {Izyumov}}\ and\ \bibinfo {author} {\bibfnamefont {Y.~N.}\ \bibnamefont {Skryabin}},\ }\href@noop {} {\emph {\bibinfo {title} {Statistical Mechanics of Magnetically Ordered Systems}}}\ (\bibinfo  {publisher} {Springer-Verlag, Heidelberg},\ \bibinfo {year} {1988})\BibitemShut {NoStop}%
\bibitem [{\citenamefont {ZACHOS}(1992)}]{zachos_1992}%
  \BibitemOpen
  \bibfield  {author} {\bibinfo {author} {\bibfnamefont {C.}~\bibnamefont {ZACHOS}},\ }\bibfield  {title} {\bibinfo {title} {Altering the symmetry of wave functions in quantum algebras and supersymmetry},\ }\href {https://doi.org/10.1142/S0217732392001270} {\bibfield  {journal} {\bibinfo  {journal} {Modern Physics Letters A}\ }\textbf {\bibinfo {volume} {07}},\ \bibinfo {pages} {1595} (\bibinfo {year} {1992})},\ \Eprint {https://arxiv.org/abs/https://doi.org/10.1142/S0217732392001270} {https://doi.org/10.1142/S0217732392001270} \BibitemShut {NoStop}%
\bibitem [{\citenamefont {Sakurai}(1994)}]{sakurai_book}%
  \BibitemOpen
  \bibfield  {author} {\bibinfo {author} {\bibfnamefont {J.~J.}\ \bibnamefont {Sakurai}},\ }\href {https://cds.cern.ch/record/1167961} {\emph {\bibinfo {title} {{Modern quantum mechanics; rev. ed.}}}}\ (\bibinfo  {publisher} {Addison-Wesley},\ \bibinfo {address} {Reading, MA},\ \bibinfo {year} {1994})\BibitemShut {NoStop}%
\bibitem [{\citenamefont {\'Alvarez}\ \emph {et~al.}(2010)\citenamefont {\'Alvarez}, \citenamefont {Rao}, \citenamefont {Frydman},\ and\ \citenamefont {Kurizki}}]{alvarez_2010}%
  \BibitemOpen
  \bibfield  {author} {\bibinfo {author} {\bibfnamefont {G.~A.}\ \bibnamefont {\'Alvarez}}, \bibinfo {author} {\bibfnamefont {D.~D.~B.}\ \bibnamefont {Rao}}, \bibinfo {author} {\bibfnamefont {L.}~\bibnamefont {Frydman}},\ and\ \bibinfo {author} {\bibfnamefont {G.}~\bibnamefont {Kurizki}},\ }\bibfield  {title} {\bibinfo {title} {Zeno and anti-zeno polarization control of spin ensembles by induced dephasing},\ }\href {https://doi.org/10.1103/PhysRevLett.105.160401} {\bibfield  {journal} {\bibinfo  {journal} {Phys. Rev. Lett.}\ }\textbf {\bibinfo {volume} {105}},\ \bibinfo {pages} {160401} (\bibinfo {year} {2010})}\BibitemShut {NoStop}%
\bibitem [{\citenamefont {Ranzani}\ and\ \citenamefont {Aumentado}(2015)}]{RanzaniAumentado2015}%
  \BibitemOpen
  \bibfield  {author} {\bibinfo {author} {\bibfnamefont {L.}~\bibnamefont {Ranzani}}\ and\ \bibinfo {author} {\bibfnamefont {J.}~\bibnamefont {Aumentado}},\ }\bibfield  {title} {\bibinfo {title} {Graph-based analysis of nonreciprocity in coupled-mode systems},\ }\href {https://doi.org/10.1088/1367-2630/17/2/023024} {\bibfield  {journal} {\bibinfo  {journal} {New Journal of Physics}\ }\textbf {\bibinfo {volume} {17}},\ \bibinfo {pages} {023024} (\bibinfo {year} {2015})}\BibitemShut {NoStop}%
\bibitem [{\citenamefont {McKay}\ \emph {et~al.}(2016)\citenamefont {McKay}, \citenamefont {Filipp}, \citenamefont {Mezzacapo}, \citenamefont {Magesan}, \citenamefont {Chow},\ and\ \citenamefont {Gambetta}}]{McKay2016}%
  \BibitemOpen
  \bibfield  {author} {\bibinfo {author} {\bibfnamefont {D.~C.}\ \bibnamefont {McKay}}, \bibinfo {author} {\bibfnamefont {S.}~\bibnamefont {Filipp}}, \bibinfo {author} {\bibfnamefont {A.}~\bibnamefont {Mezzacapo}}, \bibinfo {author} {\bibfnamefont {E.}~\bibnamefont {Magesan}}, \bibinfo {author} {\bibfnamefont {J.~M.}\ \bibnamefont {Chow}},\ and\ \bibinfo {author} {\bibfnamefont {J.~M.}\ \bibnamefont {Gambetta}},\ }\bibfield  {title} {\bibinfo {title} {Universal gate for fixed-frequency qubits via a tunable bus},\ }\href {https://doi.org/10.1103/PhysRevApplied.6.064007} {\bibfield  {journal} {\bibinfo  {journal} {Physical Review Applied}\ }\textbf {\bibinfo {volume} {6}},\ \bibinfo {pages} {064007} (\bibinfo {year} {2016})}\BibitemShut {NoStop}%
\bibitem [{\citenamefont {Stehlik}\ \emph {et~al.}(2021)\citenamefont {Stehlik}, \citenamefont {Zajac}, \citenamefont {Underwood}, \citenamefont {Phung}, \citenamefont {Blair}, \citenamefont {Carnevale}, \citenamefont {Klaus}, \citenamefont {Keefe}, \citenamefont {Carniol}, \citenamefont {Kumph}, \citenamefont {Steffen},\ and\ \citenamefont {Dial}}]{PhysRevLett.127.080505}%
  \BibitemOpen
  \bibfield  {author} {\bibinfo {author} {\bibfnamefont {J.}~\bibnamefont {Stehlik}}, \bibinfo {author} {\bibfnamefont {D.~M.}\ \bibnamefont {Zajac}}, \bibinfo {author} {\bibfnamefont {D.~L.}\ \bibnamefont {Underwood}}, \bibinfo {author} {\bibfnamefont {T.}~\bibnamefont {Phung}}, \bibinfo {author} {\bibfnamefont {J.}~\bibnamefont {Blair}}, \bibinfo {author} {\bibfnamefont {S.}~\bibnamefont {Carnevale}}, \bibinfo {author} {\bibfnamefont {D.}~\bibnamefont {Klaus}}, \bibinfo {author} {\bibfnamefont {G.~A.}\ \bibnamefont {Keefe}}, \bibinfo {author} {\bibfnamefont {A.}~\bibnamefont {Carniol}}, \bibinfo {author} {\bibfnamefont {M.}~\bibnamefont {Kumph}}, \bibinfo {author} {\bibfnamefont {M.}~\bibnamefont {Steffen}},\ and\ \bibinfo {author} {\bibfnamefont {O.~E.}\ \bibnamefont {Dial}},\ }\bibfield  {title} {\bibinfo {title} {Tunable coupling architecture for fixed-frequency transmon superconducting qubits},\ }\href {https://doi.org/10.1103/PhysRevLett.127.080505} {\bibfield  {journal} {\bibinfo  {journal} {Phys.
  Rev. Lett.}\ }\textbf {\bibinfo {volume} {127}},\ \bibinfo {pages} {080505} (\bibinfo {year} {2021})}\BibitemShut {NoStop}%
\bibitem [{\citenamefont {Kamal}\ \emph {et~al.}(2011)\citenamefont {Kamal}, \citenamefont {Clarke},\ and\ \citenamefont {Devoret}}]{Kamal2011}%
  \BibitemOpen
  \bibfield  {author} {\bibinfo {author} {\bibfnamefont {A.}~\bibnamefont {Kamal}}, \bibinfo {author} {\bibfnamefont {J.}~\bibnamefont {Clarke}},\ and\ \bibinfo {author} {\bibfnamefont {M.~H.}\ \bibnamefont {Devoret}},\ }\bibfield  {title} {\bibinfo {title} {Noiseless nonreciprocity in a parametric active device},\ }\href {https://doi.org/10.1038/nphys1893} {\bibfield  {journal} {\bibinfo  {journal} {Nature Physics}\ }\textbf {\bibinfo {volume} {7}},\ \bibinfo {pages} {311} (\bibinfo {year} {2011})}\BibitemShut {NoStop}%
\bibitem [{\citenamefont {Xu}\ \emph {et~al.}(2020)\citenamefont {Xu} \emph {et~al.}}]{Xu2020}%
  \BibitemOpen
  \bibfield  {author} {\bibinfo {author} {\bibfnamefont {M.}~\bibnamefont {Xu}} \emph {et~al.},\ }\bibfield  {title} {\bibinfo {title} {Directional photon exchange in a superconducting circuit--quantum electrodynamics architecture},\ }\href {https://doi.org/10.1038/s41567-020-0910-y} {\bibfield  {journal} {\bibinfo  {journal} {Nature Physics}\ }\textbf {\bibinfo {volume} {16}},\ \bibinfo {pages} {911} (\bibinfo {year} {2020})}\BibitemShut {NoStop}%
\bibitem [{\citenamefont {Chapman}\ \emph {et~al.}(2017)\citenamefont {Chapman} \emph {et~al.}}]{Chapman2017}%
  \BibitemOpen
  \bibfield  {author} {\bibinfo {author} {\bibfnamefont {B.~J.}\ \bibnamefont {Chapman}} \emph {et~al.},\ }\bibfield  {title} {\bibinfo {title} {Widely tunable on-chip microwave circulator for superconducting quantum circuits},\ }\href {https://doi.org/10.1103/PhysRevX.7.041043} {\bibfield  {journal} {\bibinfo  {journal} {Physical Review X}\ }\textbf {\bibinfo {volume} {7}},\ \bibinfo {pages} {041043} (\bibinfo {year} {2017})}\BibitemShut {NoStop}%
\bibitem [{\citenamefont {Maffei}\ \emph {et~al.}(2024)\citenamefont {Maffei}, \citenamefont {Pomarico}, \citenamefont {Facchi}, \citenamefont {Magnifico}, \citenamefont {Pascazio},\ and\ \citenamefont {Pepe}}]{PhysRevResearch.6.L032017}%
  \BibitemOpen
  \bibfield  {author} {\bibinfo {author} {\bibfnamefont {M.}~\bibnamefont {Maffei}}, \bibinfo {author} {\bibfnamefont {D.}~\bibnamefont {Pomarico}}, \bibinfo {author} {\bibfnamefont {P.}~\bibnamefont {Facchi}}, \bibinfo {author} {\bibfnamefont {G.}~\bibnamefont {Magnifico}}, \bibinfo {author} {\bibfnamefont {S.}~\bibnamefont {Pascazio}},\ and\ \bibinfo {author} {\bibfnamefont {F.~V.}\ \bibnamefont {Pepe}},\ }\bibfield  {title} {\bibinfo {title} {Directional emission and photon bunching from a qubit pair in waveguide},\ }\href {https://doi.org/10.1103/PhysRevResearch.6.L032017} {\bibfield  {journal} {\bibinfo  {journal} {Phys. Rev. Res.}\ }\textbf {\bibinfo {volume} {6}},\ \bibinfo {pages} {L032017} (\bibinfo {year} {2024})}\BibitemShut {NoStop}%
\bibitem [{\citenamefont {Metelmann}\ and\ \citenamefont {Clerk}(2015)}]{MetelmannClerk2015}%
  \BibitemOpen
  \bibfield  {author} {\bibinfo {author} {\bibfnamefont {A.}~\bibnamefont {Metelmann}}\ and\ \bibinfo {author} {\bibfnamefont {A.~A.}\ \bibnamefont {Clerk}},\ }\bibfield  {title} {\bibinfo {title} {Nonreciprocal photon transmission and amplification via reservoir engineering},\ }\href {https://doi.org/10.1103/PhysRevX.5.021025} {\bibfield  {journal} {\bibinfo  {journal} {Physical Review X}\ }\textbf {\bibinfo {volume} {5}},\ \bibinfo {pages} {021025} (\bibinfo {year} {2015})}\BibitemShut {NoStop}%
\bibitem [{\citenamefont {Yu}\ \emph {et~al.}(2023)\citenamefont {Yu}, \citenamefont {Luo},\ and\ \citenamefont {Bauer}}]{Yu2023ChiralitySpintronics}%
  \BibitemOpen
  \bibfield  {author} {\bibinfo {author} {\bibfnamefont {T.}~\bibnamefont {Yu}}, \bibinfo {author} {\bibfnamefont {Z.}~\bibnamefont {Luo}},\ and\ \bibinfo {author} {\bibfnamefont {G.~E.~W.}\ \bibnamefont {Bauer}},\ }\bibfield  {title} {\bibinfo {title} {Chirality as generalized spin–orbit interaction in spintronics},\ }\href {https://doi.org/10.1016/j.physrep.2023.01.002} {\bibfield  {journal} {\bibinfo  {journal} {Physics Reports}\ }\textbf {\bibinfo {volume} {1009}},\ \bibinfo {pages} {1} (\bibinfo {year} {2023})}\BibitemShut {NoStop}%
\bibitem [{\citenamefont {Geng}\ \emph {et~al.}(2023)\citenamefont {Geng}, \citenamefont {Wei}, \citenamefont {Wang},\ and\ \citenamefont {Xing}}]{Geng2023}%
  \BibitemOpen
  \bibfield  {author} {\bibinfo {author} {\bibfnamefont {H.}~\bibnamefont {Geng}}, \bibinfo {author} {\bibfnamefont {J.}~\bibnamefont {Wei}}, \bibinfo {author} {\bibfnamefont {Z.}~\bibnamefont {Wang}},\ and\ \bibinfo {author} {\bibfnamefont {D.}~\bibnamefont {Xing}},\ }\bibfield  {title} {\bibinfo {title} {Nonreciprocal charge and spin transport induced by non-hermitian skin effect in mesoscopic systems},\ }\href {https://doi.org/10.1103/PhysRevB.107.035306} {\bibfield  {journal} {\bibinfo  {journal} {Physical Review B}\ }\textbf {\bibinfo {volume} {107}},\ \bibinfo {pages} {035306} (\bibinfo {year} {2023})}\BibitemShut {NoStop}%
\end{thebibliography}

\clearpage
\onecolumngrid
\setcounter{equation}{0}
\renewcommand{\theequation}{S\arabic{equation}}

\begin{center}
    {\textbf{Supplementary Material: Collective purification of interacting quantum networks beyond symmetry constraints}} \\[0.5cm]
\end{center}

\section{Matrix representation of $\mathcal{M}$}\label{appendix:A}

Let the basis states for the $i$th spin be $\vert 0\rangle$ (down-spin state) and  $\vert 1\rangle$ (up-spin state), denoting the eigenstates of $\sigma_i^z$ with eigenvalues $-1$ and $+1$ respectively. The $z$-component of the total spin $\sum_i \sigma_i^z$ is a constant of motion,  which implies that the eigenstates will have a definite number of up spins and parity. 

The many-qubit basis states with $l$ down spins can 
be labeled by the locations $\mathbf{x}=(x_1,x_2,...,x_l$), in an ordered set with $x_1<x_2...<x_l$. An eigenstate  can be written as a superposition of the basis states,
\begin{equation}
	\vert\psi\rangle = \sum_{\mathbf{x}}\Phi^q_{\mathbf{x}} \vert \mathbf{x}\rangle,
\end{equation}
where $\Phi^q_{\mathbf{x}}$ denotes the amplitude of the  basis state $\vert \mathbf{x}\rangle$ that is labeled by the quantum number $q$. The matrix elements of the time evolution operator in the computational basis  with the eigenenergies $E_q$. are given as
\begin{eqnarray}
	\hat{U} (\mathbf{x},\mathbf{x}')= \sum_{\mathbf{x},\mathbf{x}'} \sum_q \Phi^{q}_{\mathbf{x}} \Phi^{*q}_{\mathbf{x}'} e^{-i\tau E_q}|\mathbf{x}\rangle \langle \mathbf{x}'|.
\end{eqnarray}

The matrix form of the map in Eq.~(2b) (main text) can be represented by a set of $2^{N+1}$ linear homogeneous algebraic equations with complex coefficients in the following form
\begin{eqnarray}
	\sum_{\alpha,\beta} \hat{M}^{\alpha,\beta}_{i,j} \rho^S_{\alpha,\beta} = 0,
	\label{eq:matrix_eqn}
\end{eqnarray}
 with the matrix elements given by the following formula
\begin{eqnarray}
	\hat{M}^{\alpha,\beta}_{i,j} = (\hat{U} (i,\alpha) \hat{U}^\dag (\beta,j) + \hat{U} (i',\alpha) \hat{U}^\dag (\beta,j') ) - \delta_{i,\alpha} \delta_{j,\beta}, \nonumber\\
\end{eqnarray}
where $\hat{U} (i,\alpha)$ and $\hat{U}^\dag (\beta,j)$ are each labeled by two location sets and $|i'\rangle = \hat{\sigma}^x_A |i\rangle$, $|j'\rangle = \hat{\sigma}^x_A |j\rangle$ . The above set of linear equations can be transformed to a set of $4^N-1$ linear equations with real coefficients using the Hermitian and trace-preserving conditions of density matrices. 

\setcounter{equation}{0}
\renewcommand{\theequation}{B\arabic{equation}}
\section{Analytical formulae for $\mathscr{N}(\mathbcal{M})$ and $\mathscr{R}(\mathbcal{M})$}\label{appendix:B}

\textcolor{black} A graph $G(V,E)$ is a binary relation over a finite set of vertices $V$ with a finite set of edges $E$. Here, an edge $e \in E$ is a spinor $e = (u,v)$ for a pair of adjacent vertices $u,v \in V; u \ne v$. The graphs of an dipolar-interacting spin network have no loops, i.e., edges joining the same vertices. These graphs are known as weighted graph.

\textit{\textcolor{black} {a) A graph automorphism of a graph  $G$ is a permutation $\phi$ from the graph $G$ to itself on the vertices $V_G$ that satisfies,}
$$f: ~V(G) \rightarrow V(G)~ \mbox{such that} ~ uv \in E(G)~ \mbox{iff}~ \phi(u)\phi(v) \in E(G).$$}

\textcolor{black} {The automorphism group $Aut(G) $ denotes the set of all automorphisms on  graph $G$, which forms a group under the function composition~\cite{Diestel_springer_2006_1,dehmer}. Therefore the set  $Aut(G)$ is non-empty and contains at least the identity.}\\

\textcolor{black} {b) If $x$ is a node of a graph $G$, the automorphism orbit (AO) of $x$ is $Orb(x) = \{ y \in V(x)| y = \pi(x)$ for the exchange permutation operation $ \hat{\pi} \in Aut(G)\}$. ~\cite{przulj_2007, paolo_2021, dehmer,mowshowitz}}

\textcolor{black} {The orbit of an automorphism group (AO) of a graph constitutes a decomposition of the vertices of the graph that determines the complexity of the graph. }

Let us have a graph of $N$ nodes with $K$ distinct automorphism orbits and assume that $n_j$ is the number of the nodes belonging to $j$-th orbit, which are labeled with the same color. The number of $m$ spin density matrices will be equal to the number of possible ways $S_m(n_j)$ one can choose $m$ nodes distinctly. Note that more than one nodes with same orbit can be chosen. 

If each node of the graph is distinct from the rest, the number of possible $m$ qubit density matrices would have been ${N \choose m}$. The number of possible ways $S_m$ one can choose $m$ nodes distinctly is 
\begin{eqnarray}
S_m(\{n_j\}) = {N \choose m} - D_m(\{n_j\}),
\end{eqnarray}
where $D_m(\{n_j\})$ is the extra counting due to color degeneracy, which is to be determined
subsequently.

If $n_j$ is the number of nodes in orbit $j$, the possible number of ways one can choose an $m$ spin density matrix with a single color is $\sum^K_{j=1} {{n_j} \choose m}$. We subtract this number from $S_m(\{n_j\})$, except one for each color. Hence, the first term in $D_m(\{n_j\})$ will be 
\begin{eqnarray}
\sum^{K}_{j=1}  f\big({{n_j} \choose m} -1\big),
\end{eqnarray}
with  the positive counting  function $f(y) = (y +|y|)/2$.
 
The number of identical density matrices for  $m$ spins with two distinct colors will be $$\sum^K_{j_1,j_2; j_2 \ge j_1} \sum^m_{p_1,p_2 \ge 1,p_1+p_2=m}  {{n_{j_1}} \choose {p_1}} {{n_{j_2}} \choose {p_2}}.$$
We subtract this number from $S_m(\{n_j\})$ except one for each bi-color combination. Hence, the second term in $D_m(\{n_j\})$ will be 
\begin{eqnarray}
\sum^K_{j_1,j_2; j_2 \ge j_1} \sum^m_{p_1,p_2 \ge 1,p_1+p_2=m} f({{n_{j_1}} \choose {p_1}} {{n_{j_2}} \choose {p_2}} -1).
\end{eqnarray}

The general form of $D_m(\{n_j\})$ is given by
\begin{eqnarray}
D_m(\{n_j\}) =  \sum^{m}_{\alpha = 1} \sum^K_{j_{\alpha} =1} \sum^N_{\{ p_{\alpha}\} = 1, \sum p_\alpha = m}  f\big(\prod^N_{\alpha=1} {{n_{{\alpha}}} \choose {p_{\alpha}}}-1\big) \nonumber\\
\end{eqnarray}
Therefore, the rank $\mathscr{R}(\mathcal{M})$ and the nullity $\mathscr{N}(\mathcal{M})$~\cite{meyer01} are expressed by
\begin{eqnarray}
\mathscr{R}(\mathbcal{M}) &=& \sum^N_{m=1} 3^m S_m(\{n_j\})\nonumber\\
&=&  (4^N-1) - \sum^N_{m=1} 3^m D_m(\{n_j\}), \\
\mathscr{N}(\mathbcal{M}) &=& \sum^N_{m=1} 3^m D_m(\{n_j\}).
\end{eqnarray}
The size of the steady-state 
subspace (SSS) is bounded by
\begin{eqnarray}
 0 \le \mathscr{N}(\mathbcal{M})_G \le (4^N-1) - (2 \times 3^N-3),
 \label{upper_bound}
\end{eqnarray}
where the identity graph and the complete graph satisfy the lower and upper bounds, respectively.

\setcounter{equation}{0}
\renewcommand{\theequation}{C\arabic{equation}}
\section{Mapping between nullity and node distribution in automorphism orbits }\label{appendix:C}

Let us assume two distinct graphs $G$ and $G'$ with node distributions $(n_1,n_2,...n_N)$ and $(n'_1,n'_2,...n'_N)$. Without any loss of generality the node distribution in $G$ majorizes that of $G'$, i.e. $(n_1,n_2,...n_N) \succ (n'_1,n'_2,...n'_N)$ if,
\begin{eqnarray}
\sum^N_{j=1} n_j = \sum^N_{j=1} n'_j = N, \nonumber\\
\sum^i_{j=1} n_j > \sum^i_{j=1} n'_j, i \le N.
\end{eqnarray}

From Chu-Vandermonde identity~\cite{askey_book} we have
\begin{eqnarray}
\sum^N_{\{ p_{\alpha}\} = 1}  \prod^N_{\alpha} {{n_{{\alpha}}} \choose {p_{\alpha}}} = \sum^N_{\{ p_{\alpha}\} = 1}  \prod^N_{\alpha} {{n'_{{\alpha}}} \choose {p_{\alpha}}} = 
{N \choose m},
\end{eqnarray}
with the constraint $\sum^N_{\alpha = 1} p_\alpha = m$. 
Therefore,
\begin{eqnarray}
\sum^N_{\{ p_{\alpha}\} = 1}  f\big(\prod^N_{\alpha} {{n_{{\alpha}}} \choose {p_{\alpha}}}-1\big) > \sum^N_{\{ p_{\alpha}\} = 1}  f\big(\prod^N_{\alpha} {{n'_{{\alpha}}} \choose {p_{\alpha}}}-1\big), \nonumber\\
\end{eqnarray}
for at least one $m$, which implies 
\begin{eqnarray}
D_m(\{n_j\}) &>& D_m(\{n'_j\})\nonumber\\
 \implies 
 \mathscr{N}(\mathbcal{M})_G &>& \mathscr{N}(\mathbcal{M})_G'.
\end{eqnarray}
This proves $\mathscr{N}(\mathbcal{M})$ is a strictly convex function. Then from Karamata inequality~\cite{karamata_1932}, we conclude that $\mathscr{N}(\mathbcal{M})$ is a one-to-one mapping of $(n_1,n_2,...n_N)$.

\setcounter{equation}{0}
\renewcommand{\theequation}{D\arabic{equation}}
\section{System-ancilla interaction Hamiltonian for purification of non interacting spins}\label{appendix:D}

 The Hamiltonian (1a) in the lab-frame,
  \begin{eqnarray}
\hat{H}^{Lab} (t) = \omega_0 \hat{\sigma}^z_A + \Omega_0 \big[ \hat{\sigma}^+_A e^{-i\omega_0t} + \hat{\sigma}^-_A e^{i\omega_0t}\big]  + \omega_L \sum_k \hat{\sigma}^z_k + \hat{\sigma}^z \sum_k \vec{g}_k. \hat{\vec{\sigma}}_k.
\end{eqnarray}
In a frame rotating about the $z$-axis of the ancilla with the phase and frequency of its Larmor precession,
  \begin{eqnarray}
\hat{H}^{Rot} = \Omega_0 \hat{\sigma}^x_A +  \omega_L \sum_k \hat{\sigma}^z_k + \hat{\sigma}^z \sum_k \vec{g}_k. {\vec{\sigma}_k}.
\end{eqnarray}  
Expressing the above Hamiltonian in a dressed basis of the ancilla that transforms $\hat{\sigma}^x_A \leftrightarrow {} \hat{\sigma}^z_A$, we can define,


$\hat{H}_0 = \Omega_0 \hat{\sigma}^z_A + \omega_L \sum_k \hat{\sigma}^z_k$, $\hat{H}_{SA} = \hat{\sigma}^x_A \sum_k \vec{g}_k.\vec{\sigma}_k$, $\hat{H}^{D} = \hat{H}_0 + H_{SA}$. 

In the interaction representation w.r.t $\hat{H}_0$,

  \begin{eqnarray}
\hat{H}_{SA}(t) = e^{-i\hat{H}_0t} \hat{H}_{SA} e^{i\hat{H}_0t}.
\end{eqnarray}

Assuming that the $S-A$ coupling Hamiltonian is restricted to the X-Z plane of the chain spins, i.e., $\vec{g}_k. \vec{I}_k = g^x_k I^x_k + g^z_k I^z_k$, it becomes,

  \begin{eqnarray}
\hat{H}(t) = \sum_k g^x_k \big[ e^{-i(\Omega_0 - \omega_L)t} \hat{\sigma}^+_A\hat{\sigma}^-_k + e^{-i(\Omega_0 + \omega_L)t}\hat{\sigma}^+_A \hat{\sigma}^+_k + h.c.  \big]
+\sum_k g^z_k \big[ e^{-i\Omega_0t}\hat{\sigma}^+_A +h.c.\big] \hat{\sigma}^z_k. \nonumber\\
\end{eqnarray}

Neglecting the non-secular contributions and setting the resonance condition $\Omega_0 = \omega_L$, we get

  \begin{eqnarray}
\hat H_{SA} =\hat H^{res}_{SA}  = \sum^N_{k=1} g_k(\hat{\sigma}^+_A \hat{\sigma}^-_k +\hat{\sigma}^-_A \hat{\sigma}^+_k).
\label{eq: res}
\end{eqnarray}

Let us consider, for simplicity, that each spin is coupled to the ancilla with the same coupling strength $g$ (star model). Then from Eq.~\eqref{eq: res} becomes

  \begin{eqnarray}
\hat{H}^{res}_{SA}  =  \frac{g}{2}  \sum_k (\hat{\sigma}^+_A \hat{\sigma}^-_k +\hat{\sigma}^-_A \hat{\sigma}^+_k).
\end{eqnarray}




The corresponding time-evolution operator can then be simplified as
\begin{eqnarray}\nonumber
\hat U^{res}(t) = \sum^{N/2}_{j=0} \sum^{j-1}_{m_j = -j} \big[ \cos(J^+_{j,m}t) \mathbf{1} + \sin(J^+_{j,m}t) \hat \sigma^x  \big] \mathscr{P}_j,\\
 \label{u_t}
\end{eqnarray}

where $m_j$ is the magnetic quantum number corresponding to the azimuthal quantum number $j$,
\begin{eqnarray}
 J_{j,m}^{+} =  \sqrt{(j - m) ( j + m + 1)},
\label{eq: J}  
\end{eqnarray}
 is an eigenvalue of the raising operator. $\mathbf{1}$  denotes the identity operator and $\sigma^{\alpha} \; \alpha \in \lbrace x,y,z\rbrace $ denote the respective Pauli operators on the basis of the $2 \times 2$ block diagonals of $H^{res}$. The projection operator $\mathscr{P}_j$ corresponds to each bath-spin $j$ sector.

   The chosen half-cycle division between (5a) and (5b) is simple \emph{but not unique:}   one could assume the time intervals corresponding to the swap and the dephasing operations to be unequal in each cycle (say, $\tau^{res}_n$ and $\tau^{disp}_n$ dividing the $n$'th cycle) so as to \emph{minimize the number of cycles} required to reach the desired state, i.e., maximize the purification speed. In that case, however, the problem becomes intractable, even numerically, as it demands simultaneous optimization over an infinite number of parameters $(\tau_1, \tau_2,...)$. Hence, we assume, for the sake of simplicity, that each cycle is halved between the resonant swap and the off-resonant dephasing operations and the cycle durations $\tau$ are all the same. These $\tau$ are assumed to be of the order of the inverse average value of the spin-chain couplings; i.e., $\tau \sim (\overline{g})^{-1}$, to maximize the excitation transfer from $S$ to $A$.

\setcounter{equation}{0}
\renewcommand{\theequation}{E\arabic{equation}}
\section{Solvable models of a spin chain   attached to an ancilla spin}\label{appendix:E}

These are expressed by the collective spin operators
\begin{eqnarray}
&& \hat{J}^\alpha =  \frac{1}{2} \sum^N_{k=1} \hat{\sigma}^\alpha_k; ~~~, \alpha = x,y,z \nonumber\\ 
&& \hat{J}^2 = (\sum^N_{k=1} \hat{\sigma}_k)^2.
\end{eqnarray}
The $SU(2)$ symmetry of the joint $(S+A)$ Hamiltonian under RT (Eq.~(\ref{drt_a})) is confirmed  by the  commutation relations
\begin{eqnarray}
 &&[\hat{H}_S, \hat{J}^\alpha] = 0; ~~~ \forall \alpha = x,y,z.\nonumber\\
 &&[\hat{H}_S, \hat{J}^2] = [\hat{H}_{SA}^{res}, \hat{H}_S] = 0.
\end{eqnarray}

These commutation relations imply that the eigenstates of the operators $\hat{J}^2$ and $\hat{J}^z$ are the same as the eigenstates of $\hat{H}_S$. Using the Bethe-Anstaz~\cite{bethe,izyumov} one can obtain the highest-energy degenerate eigenstates from the eigenstates of the raising $\hat{J}^+ = \hat{J}^x + i\hat{J}^y$ operators. Progressively lower-energy eigenstates are obtained by applying the lowering $\hat{J}^- = \hat{J}^x - i\hat{J}^y$ operator to these highest-energy Bethe states, thus yielding the complete basis of $\hat{H}_S$ eigenstates (SI.~\ref{appendix:D}). 

The basis states of the full Hamiltonian satisfy the following relations
\begin{eqnarray}
&&(\hat H^{res}_{SA}+\hat H_S)\vert \pm \frac{1}{2};\alpha,j,m =  \pm j\rangle = E_{\alpha,j}\vert \pm \frac{1}{2};\alpha,j,m=\pm j\rangle\nonumber\\ 
&&(\hat H^{res}_{SA}+\hat H_S)\vert\frac{1}{2};\alpha,j,m\rangle = E_{\alpha,j}\vert\frac{1}{2};\alpha,j,m\rangle  +J^+_{j,m} \vert-\frac{1}{2};\alpha,j,m+1\rangle\nonumber\\ 
&&(\hat H^{res}_{SA}+\hat H_S)\vert-\frac{1}{2};\alpha,j,m+1\rangle = E_{\alpha,j}\vert-\frac{1}{2};\alpha,j,m+1\rangle +J^+_{j,m} \vert\frac{1}{2};\alpha,j,m\rangle,\nonumber\\
\end{eqnarray}
with $J^+_{j,m}$ as in (\ref{eq: J}). 
From the above equations we find that the states $\vert\frac{1}{2};\alpha,j,m = j\rangle$ and $\vert-\frac{1}{2};\alpha,j,m = -j\rangle$ are eigenstates of the joint system and have energies $E_{\alpha,j}$.
As a consequence, the Hilbert space of the system can be divided into a direct sum of $2\times 2$ blocks of the following form with basis states $\vert\frac{1}{2};\alpha,j,m\rangle$ and $\vert-\vert\frac{1}{2};\alpha,j,m+1\rangle$.

\begin{eqnarray}
 \begin{bmatrix}
E_{\alpha,j} & J^+_{j,m}\\
J^+_{j,m} & E_{\alpha,j}
\end{bmatrix}   
\end{eqnarray}
where $m\ne j$. The eigenstates and eigenvalues are given as
\begin{eqnarray}
   \epsilon^{\pm}_{\alpha,j,m} = E_{\alpha,j} \pm J^+_{j,m}.
\end{eqnarray}

\begin{eqnarray}
 \vert \epsilon^{\pm}_{\alpha,j,m}\rangle   =  \frac{1}{\sqrt{2}}\vert\frac{1}{2};\alpha,j,m\rangle \pm \frac{1}{\sqrt{2}}  \vert-\frac{1}{2};\alpha,j,m+1\rangle. 
\end{eqnarray}

The limit of  $N$ non-interacting/isolated spins (star model) is obtained as their interaction strength $\mathcal{J}$ in Eq.~(1b) is much smaller than the $S-A$ couplings $g_k$. In this limit,  the eigenvalues simplify to $\epsilon^{\pm}_{j,m_j} =  \pm J^+_{j,m_j}$. 

The time evolution operator is thus given as 
\begin{eqnarray}
\hat U(t) = \sum_{\alpha,j,m} \sum_{\pm}e^{-it\epsilon^{\pm}_{\alpha,j,m}} \vert \epsilon^{\pm}_{\alpha,j}\rangle  \langle \epsilon^{\pm}_{\alpha,j}\vert.
\end{eqnarray}

It is natural to represent the dynamics of a system of non-interacting spins with an ancilla-spin in the irreducible or collective basis states. For even and odd numbers of qubits, the number of distinct $j$ values are $(\frac{N}{2} +1)$ and $(\frac{N+1}{2})$ respectively. In the case of $SU(2)$, combining $N$ doublets yields the Clebsch-Gordan decomposition series with the multiplicity of all possible representations~\cite{zachos_1992} 

\begin{eqnarray}
  \mathbf{2}^{\otimes N} = \bigoplus^{[N/2]}_{k=0} \left(\frac{N+1-2k}{N+1} \binom{N+1}{k}\right) (\mathbf{N}+\mathbf{1} - \mathbf{2k} ), \nonumber\\
  \label{young_tableau}
\end{eqnarray}
 where $[N/2]$ is the integer floor function. The degeneracy with each $j$ value occurs in the basis states in the irreducible representation is obtained by setting $N+1- 2k =  2j+1$ in Eq.~\ref{young_tableau},

 \begin{eqnarray}
d_j = \frac{2j+1}{N+1} \binom{N+1}{\frac{N}{2} - j}.
 \end{eqnarray}
 
The set of the complete basis for $N$ qubits can be uniquely represented by the quantum numbers corresponding to the set of observables $\{J_z, J_1,...,J_{N-1}\}$ with $J_k \equiv (\sum^{k+1}_{i=1} I_i)^2$. The quantum numbers corresponding to the same observables can be found using the method of Young Tableau~\cite{sakurai_book}. In order to reach the fully polarized state, the map $\mathbcal{F}$   must not commute with anyone from this set of observables, and hence the observables corresponding to these operators must not be conserved throughout the dynamics.


\setcounter{equation}{0}
\renewcommand{\theequation}{F\arabic{equation}}

\section{RT and ADRT in solvable models}\label{appendix:F}

\noindent \textbf{With RT:}

The state of $S+A$ in the collective angular momentum basis, after many resets of $A$, assumes the mixed diagonal form~\cite{alvarez_2010}
\begin{eqnarray}
 \rho^{(n)}_{S} =  \sum^{N/2}_{j=0(\frac{1}{2})} \sum^{j}_{m = -j}\sum^{d_j}_{\alpha =1} p^{(n)}_{\alpha,j,m} |\alpha, j, m\rangle \langle \alpha_j, j, m|, \nonumber\\
 \label{eq: drt_st}   
\end{eqnarray}

where $p^{(n)}_{\alpha,j,m}$ the probability of the $S$ eigenstate $|\alpha,j, m\rangle$ after the $n'~$th reset, is determined by the $H_{SA}$ action.

 The reset of $A$ reshuffles the blocks corresponding to different collective angular momenta $|\frac{1}{2}; \alpha, j,m\rangle$ and $|-\frac{1}{2}; \alpha,j,m+1\rangle$. Each block with particular $(j,m)$ values retains a certain fraction of the probability $p^{(n)}_{\alpha,j,m}$  and transfers the rest to the block corresponding to $(m+1)$ with the same of $j$.The block with the lowest $m$  value gains nothing, while the block with the highest value of $m = j$ accumulates all probabilities coming from the lower $m$ blocks. At the $n'~$th cycle, these probabilities $p^{(n)}_{\alpha, j,m}$ satisfy the recurrence relations~\cite{alvarez_2010}

 \begin{eqnarray}
   p^{(n+1)}_{\alpha,j,m} = p^{(n)}_{\alpha,j,m} \cos^2(J^+_{j,m} \tau) +  p^{(n)}_{\alpha,j,m-1} \sin^2(J^+_{j,m -1} \tau). \nonumber\\
   \label{eq: recurrence}
 \end{eqnarray}

  As we can see from the set of recurrence relations in Eq.~(\ref{eq: recurrence}),  a flow of probabilities $p^{(n)}_{\alpha, j,m}$ takes place within each $j$ sector from lower to higher values of $m_j$. After $n \rightarrow \infty$ cycles,  the probabilities accumulate at the highest $m = j$ blocks of each $j$ value.  Hence, the invariant blocks with $(j,m=j)$ cause a bottleneck effect that impedes further purification. The minimum value of the entropy of $S$ attainable 

 \begin{eqnarray}
\lim_{n\rightarrow \infty} \mathcal{S}^{(n)}_S = - \sum^{N/2}_{j=0(\frac{1}{2})} \sum^{d_j}_{\alpha=1} p^{(n)}_{\alpha,j,m=j} \ln p^{(n)}_{\alpha,j,m=j} \nonumber\\
\label{eq:DRT_system}
 \end{eqnarray}

The final values of the purity are given by the following forms,

\begin{eqnarray}
  P &=& \sum^{N/2}_{j=0} d_j \left(\frac{2j+1}{2^N}\right)^2 = \frac{1}{(2^N)^2 (N+1)} \sum^{N/2}_{j=0} (2j+1)^3 \binom{N+1}{\frac{N}{2} - j}  = \frac{(N/2 +1)(2N+1)}{4^N (N+1)} \binom{N+1}{\frac{N}{2}}, {~~~}\text{even~}N, \nonumber\\ 
    P &=& \sum^{N/2}_{j=1/2} d_j \left(\frac{2j+1}{2^N}\right)^2 = \frac{1}{(2^N)^2 (N+1)} \sum^{N/2}_{j=1/2} (2j+1)^3 \binom{N+1}{\frac{N}{2} - j}  = \frac{(N+3)}{4^N} \binom{N+1}{\frac{N-1}{2}}, {~~~}\text{odd~}N. \nonumber\\ 
\end{eqnarray}

\noindent \textbf{With ADRT:}

The full protocol of ADRT, consecutively consists of the resetting of $A$, polarization swapping between $S$ and $A$, and dephasing of the $S-A$ compound (by their dispersive coupling),  breaks all the existing symmetries in $S+A$. The state of $S$ for the general case after $n+1$ cycles is generally solved by the recurrence relation as a function of its state after $n$ cycles $\rho^{(n)}_S = \sum_{\bold{x}_1(l_1),\bold{x}_2(l_2)} \rho^{(n)}_{\bold{x}_1(l_1),\bold{x}_2(l_2)} \vert \bold{x}_1(l_1) \rangle \langle \bold{x}_2(l_2) \vert$  as
\begin{subequations}
\begin{eqnarray}
&&\rho^{(n+1)}_{\bold{x}_1(l_1),\bold{x}_2(l_2)} = \sum_{\bold{y}_1(l_1),\bold{y}_2(l_2)}\mathcal{G}^{0,\bold{x}_1(l_1)}_{0,\bold{y}_1(l_1)}(\tau) \mathcal{G}^{*0,\bold{x}_2(l_2)}_{0,\bold{y}_2(l_2)}(\tau) \rho^{(n)}_{\bold{y}_1(l_1),\bold{y}_2(l_2)} \nonumber\\
&& +\sum_{\bold{y}_1(l_1+1),\bold{y}_2(l_2+1)}\mathcal{G}^{1,\bold{x}_1(l_1)}_{0,\bold{y}_1(l_1+1)}(\tau) \mathcal{G}^{*1,\bold{x}_2(l_2)}_{0,\bold{y}_2(l_2+1)}(\tau) \rho^{(n)}_{\bold{y}_1(l_1+1),\bold{y}_2(l_2+1)}. \nonumber\\
\label{eq: propagator}
\end{eqnarray}
where,  $(s_A, \bold{x}(l) \equiv (x_1,x_2,...,x_l))$ are the computational basis representation of the joint $S-A$ state, involving the propagators
\begin{eqnarray}
\mathcal{G}^{s'_A,\bold{y}_1(l_2)}_{s_A,\bold{x}_1(l_1)}(\tau) \equiv \langle s'_A,\bold{y}_1 (l_2) | T_{\leftarrow} \text{e}^{-i\int^\tau_0 \hat{H}(t) dt}|s_A,\bold{x}_1 (l_1)\rangle.
\label{propagator}
\end{eqnarray}    
\end{subequations}

However, for solvable models, the action of the time evolution operator $\hat U^{res}_{SA}(t)$ on the computational basis states

  \begin{eqnarray}
 &&e^{-it \hat H^{res}_{SA}}|0;\bold{x}(l)\rangle  = \cos(J^+_{jm}t) \vert0,\bold{x}(l)\rangle - i\sum_{\bold{y}(l-1)} \psi^{\bold{y}(l-1)}_{\bold{x}(l)} \sin(J^+_{jm}t) \vert1,\bold{y}(l-1)\rangle, \nonumber\\
 \end{eqnarray}

where $\psi^{\bold{y}(l-1)}_{\bold{x}(l)} \equiv C^{\alpha,j,m}_{\bold{x}(l)} C^{\bold{y}(l-1)}_{\alpha,j,m+1}$ is the wavefunction corresponding to the transition between the states $\bold{x}(l)$ and $\bold{y}(l-1)$ which is a function of the Clebsh-Gordan coefficients relating the computational-basis and the collective-basis states. The azimuthal quantum number $m$ and the magnon number $l$ are related as $l+m=N/2$. 

Similarly, The action of the time evolution operator $\hat U^{disp}_{SA}(t)$ on the computational basis states

  \begin{eqnarray}
 &&e^{-it \hat H^{disp}_{SA}}|0;\bold{x}(l)\rangle  = e^{-it\phi(\bold{x}(l))} \vert0,\bold{x}(l)\rangle, \nonumber\\
  &&e^{-it \hat H^{disp}_{SA}}|1;\bold{x}(l)\rangle  = e^{it\phi(\bold{x}(l))} \vert1,\bold{x}(l)\rangle, \nonumber\\
 \end{eqnarray}

where, $\phi(\bold{x}(l)) \equiv \big( \sum^N_{k=1} g_k (-1)^{\sgn{l_k}}   \big)$. Therefore, the propagators are 

\begin{eqnarray}
    &&\mathcal{G}^{0,\bold{y}(l)}_{0,\bold{x}(l)}(t) = \cos(J^+_{jm}t)  e^{-it\phi(\bold{x}(l))} \delta_{\bold{y}(l),\bold{x}(l)},\nonumber\\
     &&\mathcal{G}^{1,\bold{y}(l-1)}_{0,\bold{x}(l)}(t) = -i \sin(J^+_{jm}t) \psi^{\bold{y}(l-1)}_{\bold{x}(l)} e^{-it\phi(\bold{y}(l-1))}.
\end{eqnarray}

\section{Circuit-QED implementation of non-reciprocal Networks}
Graphs with nonreciprocal couplings~\cite{RanzaniAumentado2015} may affect the purity bounds discussed in the manuscript. In the presence of nonreciprocal couplings, we enter the category of directed graphs that are inherently asymmetric. One possible implementation of such a partially nonreciprocal network can be 
realized in a superconducting circuit-QED platform consisting of four qubits 
$Q_1, Q_2, Q_3, Q_4$ interconnected through parametrically modulated couplers. Reciprocal interactions between most qubit pairs can be generated using standard tunable couplers operated with symmetric parametric modulation \cite{McKay2016, PhysRevLett.127.080505}. 
To engineer nonreciprocity exclusively between $Q_1$ and $Q_2$, one may introduce a parametrically driven three-wave–mixing element that imparts a synthetic gauge phase to the exchange interaction, thereby breaking reciprocity while leaving the rest of the network reciprocal 
\cite{Kamal2011, RanzaniAumentado2015, Xu2020, Chapman2017}. 
Such schemes allow embedding a directional coupling within an otherwise 
reciprocal multi-qubit architecture.

\begin{figure}
 \centering
 \includegraphics[width=0.5\linewidth]{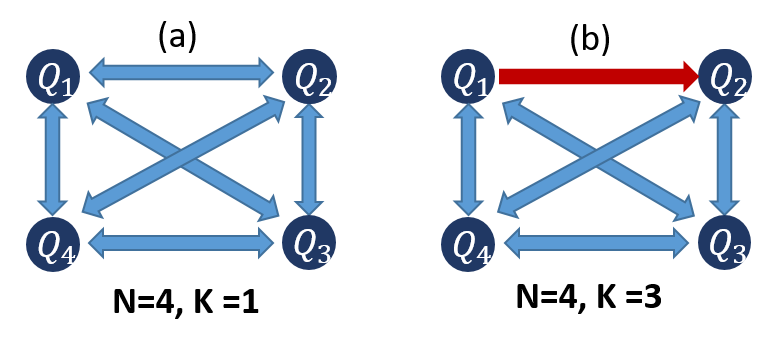}
  \caption{{(a) Schematic of a simple symmetric graph with all-to-all reciprocal couplings that has only one AO ($K=1$) and (b) another graph with increased asymmetry and AO's ($K=3$) when one of the vertices is made directed, i.e., coupling between $Q_1 - Q_2$ is non-reciprocal. } }
  \label{nonrecgraph}
  \end{figure}

Schematically, the qubit connectivity is illustrated in Fig.~\ref{nonrecgraph}. Here, the arrows indicate directional (nonreciprocal) coupling, while the bidirectional links represent standard reciprocal interactions.  In addition to the four primary qubits, we introduce an ancilla qubit \(R\) that can be periodically reset to its ground state \(|0\rangle\). This qubit $R$ is coupled, say, to \(Q_3\), providing a mechanism to \emph{purify or stabilize} the state of the four-qubit system. The initial condition for the experiment assumes that \(Q_1, Q_2, Q_3, Q_4\) are all prepared in the excited state \(|1\rangle\), while the ancilla \(R\) is periodically reset throughout the protocol \(|0\rangle\). This setup allows us to study combined unitary and dissipative dynamics in a small qubit network.
The Hermitian Hamiltonian describing the qubit energies and reciprocal couplings reads:
\begin{align}
H_Q &= \sum_{j=1}^{4} \frac{\omega_j}{2} \sigma_j^z + \frac{\omega_R}{2} \sigma_R^z, \\
H_{\rm rec} &= \sum_{\substack{i<j \\ (i,j)\neq (1,2)}} J_{ij} (\sigma_i^+ \sigma_j^- + \sigma_i^- \sigma_j^+), \\
H_{\rm ancilla} &= J_{R3} (\sigma_R^+ \sigma_3^- + \sigma_R^- \sigma_3^+),
\end{align}
where \(\omega_j\) are qubit transition frequencies, \(J_{ij}\) are reciprocal coupling strengths, and \(J_{R3}\) is the ancilla-qubit coupling.

To implement a non-reciprocal (directional) interaction between \(Q_1\) and \(Q_2\), we couple both qubits to a  waveguide mode \(a\) :

\begin{equation}
H_{\rm wg} = g_1 (\sigma_1^+ a + \sigma_1^- a^\dagger) + g_2 (\sigma_2^+ a + \sigma_2^- a^\dagger) + \omega_a a^\dagger a,
\end{equation}
where \(g_1, g_2\) are coupling strengths to the waveguide and \(\omega_a\) is its mode frequency. This Hamiltonian is Hermitian, as required. By tracing out the waveguide degrees of freedom under the Born-Markov approximation, we obtain an  effective directional (chiral) interaction  between \(Q_1\) and \(Q_2\) in Lindblad form \cite{PhysRevResearch.6.L032017, MetelmannClerk2015}:

\begin{equation}
\mathcal{L}_{12}[\rho] = \Gamma_{12} \Big( \sigma_2^- \rho \, \sigma_1^+ - \frac{1}{2} \{\sigma_1^+ \sigma_2^-, \rho\} \Big),
\end{equation}
where \(\Gamma_{12}\) sets the effective  rate of directional excitation transfer  from \(Q_1 \to Q_2\).  

This framework captures the all-to-all reciprocal couplings and the waveguide-mediated nonreciprocal interaction, and the resettable ancilla dynamics, forming the basis for studying graphs with non-reciprocal couplings. With increased asymmetry and the AO's, such graphs may be better polarized than highly symmetric graphs. 
Such effects are highly relevant for spin transport studies, where the nonreciprocity of spintronic devices may have a strong impact on the device performance~\cite{Yu2023ChiralitySpintronics,Geng2023}.

\end{document}